\documentclass[journal]{IEEEtran}

\usepackage{amsmath,amssymb,amsthm}
\usepackage[T1]{fontenc}
\usepackage{graphicx}
\usepackage{booktabs}
\usepackage{pifont}
\usepackage{algorithm}
\usepackage{algorithmic}
\usepackage{cite}
\usepackage{url}
\usepackage{xcolor}

\newtheorem{assumption}{Assumption}
\newtheorem{definition}{Definition}
\newtheorem{proposition}{Proposition}
\newtheorem{theorem}{Theorem}
\newtheorem{lemma}{Lemma}
\newtheorem{corollary}{Corollary}
\newtheorem{remark}{Remark}

\newcommand{\fmax}{\ensuremath{f_{\max}}}
\newcommand{\FSS}{\ensuremath{\overline{\mathrm{FSS}}}}
\newcommand{\FSSp}{\ensuremath{\mathrm{FSS}}}
\newcommand{\DWPRm}{\ensuremath{\mathrm{DWPR}}}
\newcommand{\Drob}{\ensuremath{D^{\mathrm{rob}}}}
\newcommand{\TEI}{\ensuremath{\mathrm{TEI}}}
\newcommand{\Lf}{\ensuremath{\mathbf{L}_{f}}}
\newcommand{\lamtwo}{\ensuremath{\lambda_{2}(\Lf)}}
\newcommand{\cmark}{\ding{51}}
\newcommand{\xmark}{\ding{55}}

\begin{document}

\title{Adversary-as-Agents: A Co-Evolutionary Agent-Based Threat-Modelling
Framework for Wireless and Mobile Networks}

\author{Indrakshi~Dey,~\IEEEmembership{Senior~Member,~IEEE,}
        Mohamed~Khalfallah~Hassan,
        Sayanti~Ghosh,~\IEEEmembership{Member,~IEEE,}
        and~Nicola~Marchetti,~\IEEEmembership{Senior~Member,~IEEE}
\thanks{{Indrakshi Dey is with the Department of Computing and Mathematics, South East
Technological University, Waterford, X91 K0EK Ireland (e-mail: indrakshi.dey@setu.ie).
Mohamed Khalafalla Hassan is with the School of Computing,
National College of Ireland, Dublin, D01 K5Y6 Ireland (e-mail:
mohamed.KhalafallaHassan@ncirl.ie).
Sayanti Ghosh and Nicola Marchetti are with the Department of Electrical
and Electronic Engineering, Trinity College Dublin, Dublin, D02 PN40 Ireland
(e-mail: saghosh@tcd.ie; nicola.marchetti@tcd.ie). Supported in part by the US-Ireland Research and
Development Partnership RI-SFI-23/US/3924, and Research Ireland Grant
13/RC/2077\_P2.}}}

\maketitle

\begin{abstract}
Threat modelling for wireless and mobile networks is dominated by static,
catalogue-driven methods that fix the adversary in advance and never ask whether an
attack is feasible against the defence actually deployed; agent-based resilience
studies share this limitation, optimising a defender against an exogenous threat
profile. We instead endogenise the adversary. Network entities run a decentralised
consensus agent-based model (DC-ABM) defence, while an adaptive adversary population
co-evolves by allocating a bounded attack budget across vulnerability-mode partitions.
The equilibrium is a rankable, feasibility-grounded threat model expressed through
three structural metrics: degeneracy-weighted path robustness, trust-weighted
functional substitutability, and robust degeneracy. We prove that the DC-ABM defence
contracts below its Byzantine breakdown threshold and that the induced adversary
payoff is convex against that threshold, so the emergent attack drives a subset of
partitions to breakdown in priority order. We further prove that this attack
concentrates on the least substitutable partitions, that severity diverges as the
induced corrupted-mass fraction approaches breakdown, and that operators can shrink
the attack surface at deployment time along a closed-form
connectivity-substitutability exchange rate. A reference implementation covering
Byzantine faults, jamming, pilot contamination, eavesdropping, and correlated function
failure, benchmarked against state-of-the-art baselines, confirms each result. It
predicts consensus breakdown with a root-mean-square error of $0.014$, attains a mean
Kendall correlation of $0.34$ with independently measured consensus degradation where
catalogue-based baselines remain near zero, and cuts the average Threat Exposure Index
by $37\%$ under runtime adaptation. The equilibrium solver scales as $O(M\log M)$ in
the number of partitions.
\end{abstract}

\begin{IEEEkeywords}
Agent-based modelling, threat modelling, wireless and mobile networks,
adversarial co-evolution, attack surface, structural metrics, Byzantine
resilience, distributed consensus, network security.
\end{IEEEkeywords}

\section{Introduction}
\IEEEPARstart{W}{ireless} and mobile networks are evolving into highly distributed,
autonomous systems that span virtualised radio access, edge intelligence, and
consensus-driven orchestration. Their control and data planes increasingly rely on
many interacting entities that reach agreement through local message exchange,
rather than on a single central controller. This shift enlarges the attack surface
in ways that conventional threat modelling struggles to capture
\cite{itu2160,shostack}. It also changes the fundamental question that a threat
model must answer. In a distributed network the relevant question is no longer only
which assets an attacker can reach, but whether the network can absorb an attack by
substituting a compromised function with an equivalent one and still reach a correct
collective decision. This is a networking-layer question about topology, redundancy,
and the resilience of the distributed protocol, and it is exactly the question that
catalogue-based methods leave unanswered.

Established methodologies enumerate threats from a static catalogue and score them
with expert-assigned weights. The Spoofing, Tampering, Repudiation, Information
disclosure, Denial of service, and Elevation of privilege taxonomy, known as STRIDE
\cite{shostack}, attack trees \cite{schneier}, and logic-based attack-graph
analysers such as the Multihost, multistage Vulnerability Analysis tool, known as
MulVAL \cite{mulval}, all take this form, and the resulting threats are commonly
scored through the Common Vulnerability Scoring System, known as CVSS \cite{cvss}.
Two structural weaknesses follow. First, the adversary is exogenous. The catalogue
is fixed in advance and does not react to the defence that the operator actually
deploys, so it cannot represent an attacker that reallocates effort once it observes
where the network is hardened. Second, the analysis is feasibility-blind. It does
not ask whether an attack can be absorbed by the redundant, distributed structure of
the network, so it systematically over-rates or under-rates threats in architectures
such as cell-free and virtualised deployments that carry substantial functional
redundancy \cite{cellfree}. A threat that a catalogue rates as critical may be
harmless because the network reroutes around it, and a threat that a catalogue
ignores may be the one that pushes a distributed consensus past its breakdown point.

Agent-based modelling represents network entities as autonomous, locally
interacting agents whose collective behaviour produces emergent system-level
outcomes. It is a natural tool for distributed wireless systems and has been applied
to distributed resource management \cite{ivoghlian} and to Byzantine-resilient
consensus \cite{suvaidya}. Yet existing agent-based resilience studies model only
the defensive half of the interaction. They assume a given threat, such as a fixed
fraction of Byzantine agents or a fixed salience profile over attack modes, and they
then optimise a defender against it. The threat model itself is never an output of
the analysis. Agent-based ideas have also been applied directly to security, for
example to simulate cyber-warfare between attacker and defender agents
\cite{kotenko}, {while catalogue-based threat modelling has matured through structured
frameworks such as the MITRE ATT\&CK (Adversarial Tactics, Techniques,
and Common Knowledge) knowledge base \cite{mitre}}, and through STRIDE-based analyses of cyber-physical systems
\cite{strideCPS}. These efforts either replay fixed adversary playbooks or enumerate
static catalogues. None couples an adaptive adversary to a feasibility-grounded
distributed defence so that the ranking of threats emerges from the interaction
between that adversary and the defence it actually faces.
{This paper closes that gap by endogenising the adversary. 

We cast threat modelling
as a co-evolutionary agent-based interaction. The network runs a decentralised
consensus agent defence, which we call the decentralised consensus agent-based model
(DC-ABM), and an adaptive adversary population allocates a bounded budget over
vulnerability-mode partitions. We adopt DC-ABM because it provides a fully
distributed defence mechanism with adaptive trust-weighted consensus, exposes an
explicit resilience breakdown threshold, and admits analytical characterisation of
the relationship between network topology, functional substitutability, and
Byzantine resilience. These properties make it a suitable foundation for
feasibility-grounded threat modelling while remaining compatible with large-scale
wireless and mobile networks. The equilibrium of this interaction is the threat
model. It is an emergent, feasibility-grounded, and rankable object, rather than a
hand-curated list. To keep the framework within the networking layer rather than
the physical layer, the only cross-layer quantity we import is a scalar feasibility
signal, namely the target-to-achieved service-level ratio that any management plane
can observe. Everything else, including the structural metrics, the breakdown
threshold, and the equilibrium ranking, is computed from the network topology and
the consensus dynamics. Fig.~\ref{fig:model} summarises the resulting
architecture and is referred to throughout Sec.~\ref{sec:model}.}

The framework is positioned squarely within the concerns of the networking
community. {This perspective enables threat modelling to evolve alongside network adaptation rather than remaining fixed throughout the system lifecycle.}  The defence is a distributed consensus protocol whose resilience is
governed by the algebraic connectivity of the coordination graph, a topological
design variable. The threat ranking is feasibility-grounded through functional
substitution, which is a statement about routing and redundancy in the network. The
central control result shows how the two deployment-time levers available to an
operator, namely raising the algebraic connectivity of the coordination graph by
adding coordination links, and raising functional substitutability by reassigning
substitutable functions across entities, reshape the attack surface, and it prices
one lever against the other. {The solver that produces the ranking runs in $O(M\log M)$ time
in the number of partitions, which keeps the analysis practical for large mobile
deployments with drifting conditions.} Our contributions are as follows.
\begin{itemize}
\item We give an adversary-as-agents co-evolutionary formulation over
vulnerability-mode partitions with an explicit attack budget, defended by a
self-contained DC-ABM defence whose consensus update we state in full
(Sec.~\ref{sec:model}).
\item We prove a contraction and convergence theorem for the DC-ABM defence below its breakdown threshold, we prove that the micro-tuning factor is unimodal, and we give the breakdown factorisation that separates topological from substitution effects (Sec.~\ref{sec:defence}).
\item We build a feasibility-grounded threat calculus on top of that defence,
consisting of a per-partition exploitability-severity score and an aggregate
Threat Exposure Index. Within this calculus we prove the following five results.
\begin{enumerate}
\item The adversary payoff is convex against the breakdown barrier, so the
optimal attack is a vertex allocation rather than a smooth spread.
\item The emergent attack concentrates on the least substitutable partitions.
\item Severity diverges as the induced corruption approaches breakdown.
\item A Stackelberg configuration of the operator problem exists.
\item The operator can shrink the emergent attack surface at design time along a
closed-form exchange rate.
\end{enumerate}
We close the section with a Lipschitz sensitivity bound and an $O(M\log M)$
complexity guarantee (Sec.~\ref{sec:coevo}).
\item We validate every analytical result with a reference implementation,
specified in Sec.~\ref{sec:impl}, that integrates the agent-level consensus and
trust dynamics of equations \eqref{eq:consensus} and \eqref{eq:trust} directly, extracts the
structural coordinates from the converged agent state, and closes the loop with
the priority-fill equilibrium solver of Proposition~\ref{prop:conc}. We compare
against several state-of-the-art threat-modelling and defence baselines across
five representative wireless and mobile threats (Sec.~\ref{sec:results}).
\end{itemize}

\section{Related Work}\label{sec:related}
\subsection{Catalogue-Based and Graph-Based Threat Modelling}
The dominant threat-modelling methods build a catalogue of threats and score it.
STRIDE organises threats into six classes and drives a checklist analysis
\cite{shostack}. Attack trees decompose a goal into conjunctive and disjunctive
subgoals and propagate leaf costs to the root \cite{schneier}. Attack-graph tools
such as MulVAL reason over reachability and configuration to enumerate multistage
paths \cite{mulval}, and Bayesian attack graphs add probabilistic dependence between
exploitation steps \cite{bayesag}. CVSS attaches severity scores drawn from expert
judgement \cite{cvss}. The MITRE matrix systematises adversary tactics and
techniques into a shared vocabulary \cite{mitre}, and topological vulnerability
analysis composes reachability and configuration data into network-wide attack
graphs \cite{jajodia}. All of these represent the
adversary as an exogenous input. The threat set and its scores are fixed before the
defence is considered, so the analysis cannot express an attacker that reallocates
effort in response to hardening, and it cannot express whether a distributed network
can absorb a given attack by substituting the functions it compromises with
structurally equivalent functions hosted elsewhere in the network. Our framework differs in both
respects. The adversary is a budget-constrained optimiser whose allocation emerges
from its interaction with the deployed consensus defence, and every partition
carries structural coordinates that make its criticality a computed quantity
rather than an assigned one.

\subsection{Game-Theoretic Security}
Game theory models security as an interaction between strategic attackers and
defenders \cite{gametheory}. Stackelberg security games, in which a defender commits
to a strategy that an attacker then best-responds to, have been applied widely to
resource allocation for protection \cite{tambe,alpcanbasar}. Our formulation shares
the leader-follower structure, since the operator commits to a topology and a
tuning, and the adversary best-responds with a budget allocation. The difference is
that our follower problem is grounded in the feasibility of a specific distributed
defence: the attacker payoff is not an abstract utility but the realised proximity
of each partition to the consensus breakdown threshold, which we derive from the
dynamics rather than assume. This gives the equilibrium a physical meaning that a
generic security game does not carry.

\subsection{Resilient Distributed Consensus}
The defence we adopt belongs to the literature on Byzantine-resilient consensus.
Classical results establish that agreement among distributed agents can tolerate a
bounded fraction of arbitrarily faulty participants \cite{lamport}. Mean-subsequence-
reduced and trimmed-mean filters discard extreme neighbour values and admit
resilience guarantees on graphs with sufficient connectivity
\cite{leblanc,sundaram}. Fault-tolerant multi-agent optimisation extends these ideas
to distributed objectives \cite{suvaidya}. Approximate agreement under faults \cite{dolev}, iterative Byzantine consensus on
directed graphs \cite{vaidyabyz}, resilience to locally bounded adversaries
\cite{zhang}, and scaled consensus \cite{roy} extend the same filtering principle
in different directions. The resilience of such protocols is controlled by
graph-theoretic quantities \cite{mesbahi}, in particular the algebraic
connectivity \cite{fiedler} that governs the rate of information diffusion
\cite{olfatisaber,xiaoboyd} and the robustness notions that govern tolerance to
adversarial nodes \cite{leblanc}.
We take a trust-weighted trimmed-mean consensus as the defence, and rather than
proving a new resilience guarantee for it, we extract from it the operating
characteristic that a threat model needs, namely the breakdown threshold as a
function of connectivity and substitutability, and the contraction of the honest
disagreement below that threshold.

\subsection{Agent-Based and Learning Approaches}
Agent-based modelling has been used to simulate cyber-conflict between attacker and
defender populations \cite{kotenko} and to manage distributed wireless resources
with multi-agent reinforcement learning \cite{ivoghlian}. Learning red teams can
discover strong attacks by interacting with a defended environment. These approaches
are complementary to ours. They produce attacks or policies but not a ranked,
feasibility-grounded, and interpretable threat model with closed-form design
levers, such as the connectivity-substitutability exchange rate of
\eqref{eq:exchange} and the interior trimming and trust optimum of
Lemma~\ref{lem:kappa}, that tell an operator what to change and by how much.
Recent advances have also explored large language models (LLMs) and autonomous AI
agents for cyber-defence planning, attack simulation, vulnerability analysis, and
security reasoning \cite{pentestgpt,fangllm,pentestai,llmsecsurvey}. These approaches provide semantic reasoning and automated decision support but generally operate over predefined attack knowledge, textual descriptions, or historical security artefacts rather than explicitly modelling the underlying network dynamics. Consequently, they do not capture the interaction between distributed resilience mechanisms and adaptive adversaries that determines whether attacks remain operationally feasible. The proposed framework is complementary, providing a mathematically grounded network-level model that can serve as a structural reasoning layer beneath future AI-driven security agents.
Our equilibrium is analytically characterised, which is what allows the controllability
and exchange-rate results, and the priority-fill solver that produces it is orders of
magnitude cheaper than training a learning red team. We include a multi-agent
reinforcement learning red team and a defence based on the alternating direction
method of multipliers \cite{reifert} among our baselines in Sec.~\ref{sec:results}.

\section{System, Adversary, and Threat Surface}\label{sec:model}
Fig.~\ref{fig:model} summarises the framework developed in this section and the
next two: a population of defender agents running the DC-ABM consensus, a
population of adversary agents spending a bounded budget, the vulnerability-mode
partitions over which the two populations meet, the structural metrics through
which their interaction is read, and the macro and micro levers with which the
operator reshapes it.

\subsection{Network as Defender Agents}
Let $\mathcal{K}=\{1,\dots,K\}$ index the $K$ network entities, which may be nodes,
virtualised functions, or slices. The entities are interconnected by a coordination
graph $\mathcal{G}_{f}$, which is the graph over which the management plane runs
its consensus-based control exchanges \cite{vestin} and which may be regular,
small-world \cite{wattsstrogatz}, or scale-free \cite{barabasi}. We write $\Lf$
for its graph Laplacian and $\lamtwo$ for its algebraic connectivity, the
second-smallest eigenvalue of $\Lf$, which governs the rate of information
diffusion across the network. Each entity $k$ holds a state
$x_{k}(t)\in\mathbb{R}$ that represents its local view of a quantity on which the
network must agree, for example a resource-allocation decision, a slice
configuration, or a trust estimate. Each entity runs a decentralised consensus update
that aggregates the reports of its neighbours with a trust-weighted trimmed mean. Let
$\mathcal{N}_{k}$ denote the neighbours of $k$, let $\tau_{kj}(t)\in[0,1]$ denote the
reputation that $k$ assigns to neighbour $j$, and let $\beta\in(0,1/2)$ denote the
trim fraction that discards the most extreme reports on each side. The update is
\begin{equation}
x_{k}(t+1)=\frac{\sum_{j\in\mathcal{R}_{k}(t)}\tau_{kj}(t)\,x_{j}(t)}
{\sum_{j\in\mathcal{R}_{k}(t)}\tau_{kj}(t)},
\label{eq:consensus}
\end{equation}
where $\mathcal{R}_{k}(t)\subseteq\mathcal{N}_{k}\cup\{k\}$ is the retained set that
remains after sorting the values $\{x_{j}(t):j\in\mathcal{N}_{k}\cup\{k\}\}$ and
removing the $\lfloor\beta\,|\mathcal{N}_{k}\cup\{k\}|\rfloor$ largest and smallest.
The reputations evolve so that persistently inconsistent neighbours lose influence,
\begin{equation}
\tau_{kj}(t+1)=\tau_{kj}(t)\,\big(1-\eta_{\tau}\,\phi\big(|x_{j}(t)-x_{k}(t+1)|\big)\big),
\label{eq:trust}
\end{equation}
where $\eta_{\tau}\in(0,1)$ is a trust learning rate and $\phi(\cdot)\in[0,1)$ is a
bounded, non-decreasing penalty, taken here as $\phi(z)=\tanh(z)$. Reputations are
clipped to a small positive floor so that no neighbour is permanently excluded on the
basis of transient disagreement.

As a management-plane observable we retain the scalar feasibility signal
\begin{equation}
D_{k}=\frac{s_{k}^{\mathrm{tar}}}{s_{k}^{\mathrm{ach}}},
\label{eq:feas}
\end{equation}
the ratio of the target service level $s_{k}^{\mathrm{tar}}$ to the achieved service
level $s_{k}^{\mathrm{ach}}$ of entity $k$. Entity $k$ is feasible if and only if
$D_{k}\le 1$. This degeneracy ratio is deliberately transmission-agnostic, so that
the framework remains a networking-layer construct and does not depend on the details
of the physical layer. The achieved service level degrades as the consensus is
disrupted, so $D_{k}$ is the bridge between the disagreement dynamics of
\eqref{eq:consensus} and the operational health of the network.

\subsection{Vulnerability-Mode Partitions}
Let $\Pi=\{\mathcal{P}_{1},\dots,\mathcal{P}_{M}\}$ partition the $M$ attackable modes
of the network. Representative wireless and mobile partitions are Byzantine or
compromised nodes, jamming and interference \cite{jamming}, pilot or identity
contamination, cache-aware eavesdropping, and correlated virtualised network
function (VNF) failure \cite{nfvsec}; together they cover the classes that
dominate published 5G threat analyses \cite{5gsec}.
Partitions are the atoms over which the adversary spends effort and over which the
operator assigns a salience profile $\{\zeta_{\mathcal{P}}\}$, with
$\zeta_{\mathcal{P}}\ge 0$ weighting the operational importance of partition
$\mathcal{P}$. These partitions subsume the categories of catalogue methods. Spoofing
and tampering map to pilot or identity contamination and to Byzantine nodes,
information disclosure maps to cache-aware eavesdropping, and denial of service maps
to jamming and to correlated function failure. MITRE tactics map onto the same five
modes: impact onto jamming and correlated VNF failure, since both degrade the service
that the consensus is meant to deliver, and defence evasion onto Byzantine or
compromised nodes and pilot or identity contamination, since both operate by passing
values that the trimming and trust filters of \eqref{eq:consensus}--\eqref{eq:trust}
do not reject \cite{mitre,strideCPS}. The difference is that
each partition here also carries the quantitative structural coordinates of
Definition~\ref{def:surface}, so that its criticality is computed rather than
assigned.

\subsection{Structural Metrics as Attack-Surface Coordinates}
For each partition $\mathcal{P}$ we use three structural metrics as the coordinates of
the attack surface. First, path robustness, the degeneracy-weighted path robustness
$\DWPRm(\mathcal{P})=1-\lVert\alpha_{\mathcal{P}}\rVert^{2}$, is the Gini-Simpson
spread of an entity's serviced resources across modes, where $\alpha_{\mathcal{P}}$ is
the rate-weighted mode-mass distribution over $\mathcal{P}$. A low value means that the
serviced resources are concentrated on few modes, so the partition is exposed, because
losing one mode cannot be compensated by the others. Second, trust-weighted functional
substitutability, the functional substitution score
$\FSSp^{\mathrm{trust}}(\mathcal{P})$, is the reputation-weighted count of structurally
distinct substitutes for the functions attacked in $\mathcal{P}$, and its network-wide
average is denoted $\FSS(\mathcal{P})$. A high value means that the network can reroute
around an attack on $\mathcal{P}$ by invoking equivalent functions elsewhere. Third,
robust degeneracy $\Drob(\mathcal{P})=\max_{a}\,s^{\mathrm{tar}}/s^{\mathrm{ach}}(a)$ is
the worst-case feasibility ratio over admissible adversary actions $a$, and it measures
how badly the achieved service can degrade under the strongest admissible attack on
$\mathcal{P}$.

\begin{definition}[Threat surface]\label{def:surface}
The threat surface is the map
$\Sigma:\ \mathcal{P}\mapsto\big(\DWPRm(\mathcal{P}),\,
\FSSp^{\mathrm{trust}}(\mathcal{P}),\,\Drob(\mathcal{P})\big)$ evaluated over all
$\mathcal{P}\in\Pi$. The pair $(\Sigma,b^{\star})$, where $b^{\star}$ is the adversary
equilibrium of Sec.~\ref{sec:coevo}, is the emergent threat model.
\end{definition}

\begin{figure}[!t]
\centering
\includegraphics[width=\columnwidth]{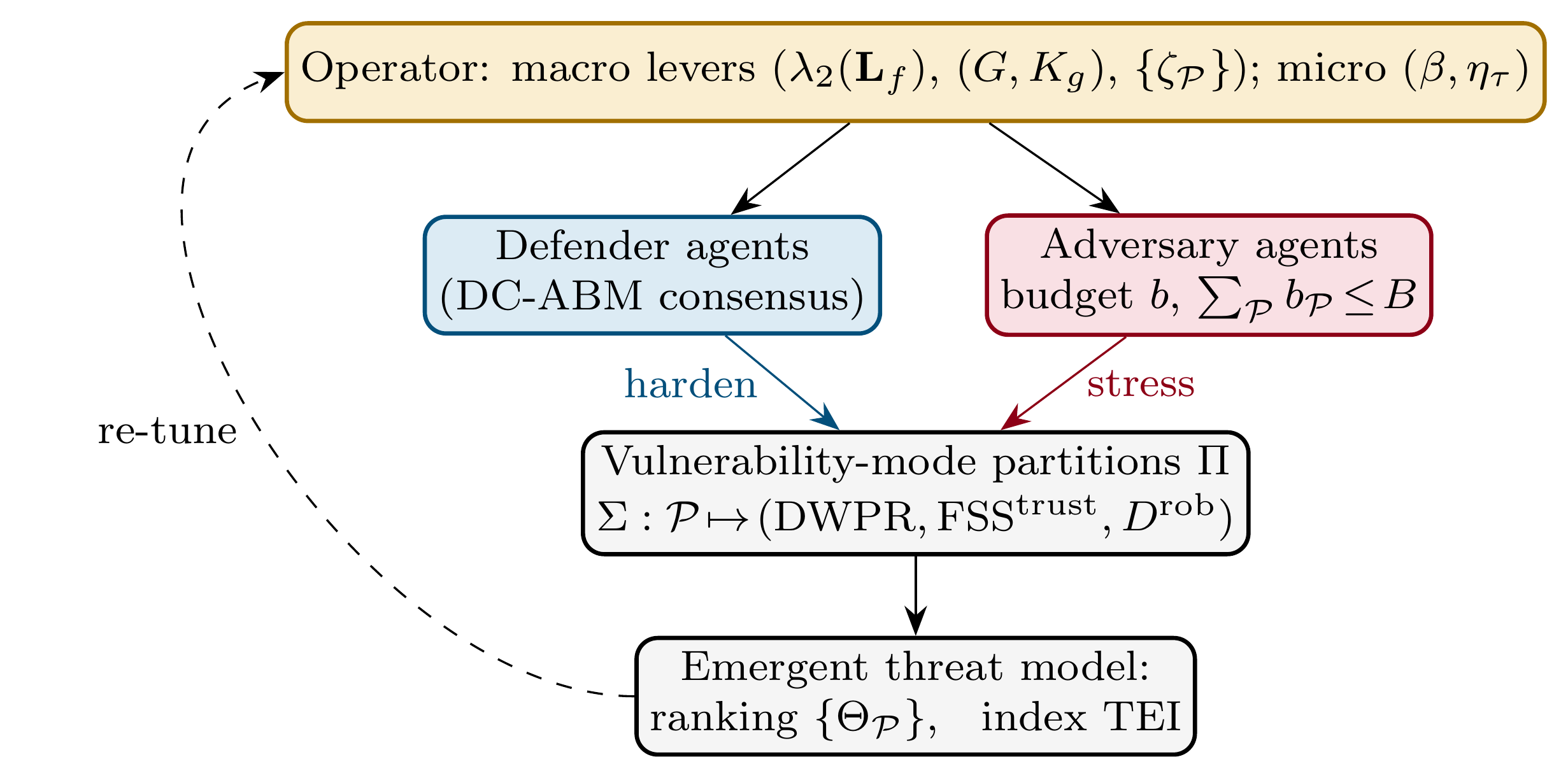}
\caption{Adversary-as-agents threat modelling. The defender and adversary agent
populations co-evolve over the vulnerability-mode partitions. Their interaction, read
through the structural metrics, emerges as a rankable threat model that the operator
can reshape through the macro levers, namely the coordination-graph topology and
the grouping of network functions into vulnerability-mode partitions, and through
the micro levers of trimming and trust.}
\label{fig:model}
\end{figure}

\section{Analysis of the DC-ABM Defence}\label{sec:defence}
This section establishes the operating characteristic of the DC-ABM defence that the
threat calculus of Sec.~\ref{sec:coevo} relies on. We show that below a breakdown
threshold the honest disagreement contracts to zero, we give the rate, we show that
the micro-tuning factor is unimodal in the trimming and trust parameters, and we
state the breakdown factorisation that separates the topological contribution from the
substitution contribution.

\subsection{Contraction and Convergence Below Breakdown}
Let $\mathcal{H}\subseteq\mathcal{K}$ denote the set of honest entities and let
$\mathcal{B}=\mathcal{K}\setminus\mathcal{H}$ denote the Byzantine entities, which may
send arbitrary and time-varying values. Let $f=|\mathcal{B}|/K$ denote the
corrupted-mass fraction. Write the honest disagreement as
$\Delta(t)=\max_{k\in\mathcal{H}}x_{k}(t)-\min_{k\in\mathcal{H}}x_{k}(t)$. We use the
standard notion of graph robustness for resilient consensus.

\begin{definition}[$r$-robustness \cite{leblanc}]\label{def:robust}
A graph is $r$-robust if for every pair of disjoint non-empty vertex subsets, at least
one of the two subsets contains a vertex with at least $r$ neighbours outside its own
subset. The graph is $(r,s)$-robust if in addition the required property holds for at
least $s$ vertices jointly across the two subsets.
\end{definition}

\begin{theorem}[Contraction and convergence]\label{thm:contraction}
Consider the trust-weighted trimmed-mean update \eqref{eq:consensus} with trim fraction
$\beta$ on a coordination graph that is $(2F+1)$-robust, where $F=\lceil\beta
K\rceil$ bounds the number of trimmed values per neighbourhood. If the corrupted-mass
fraction satisfies $f<\fmax$, where $\fmax$ is the largest fraction for which every
honest entity retains at least one honest neighbour after trimming, then there exists a
contraction factor $\rho(f)\in[0,1)$ such that
\begin{equation}
\Delta(t+1)\le\rho(f)\,\Delta(t),\qquad
\rho(f)=1-\gamma\,(1-f/\fmax),
\label{eq:rate}
\end{equation}
for a constant $\gamma\in(0,1]$ that depends on the connectivity and the trust floor.
Consequently the honest states converge geometrically to a common value, and
$\rho(f)\to 1^{-}$ as $f\to\fmax^{-}$.
\end{theorem}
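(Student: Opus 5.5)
The plan is the standard LeBlanc--Sundaram sandwich argument for mean-subsequence-reduced dynamics, adapted to trust weights.

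\textbf{Step 1: honest updates stay in the honest hull.} Fix an honest entity $k$ at time $t$. At most $F$ of its neighbours are Byzantine, and $\lfloor\beta|\mathcal{N}_k\cup\{k\}|\rfloor$ values are trimmed on each side. I will show that every retained Byzantine value lies between the smallest and largest honest values in $k$'s neighbourhood. The reason is that any retained value above all honest values would have forced an honest value to be trimmed in its place.

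Consequently $x_k(t+1)$ in \eqref{eq:consensus} is a convex combination of values in
\[
[m(t),M(t)],\qquad m(t)=\min_{j\in\mathcal{H}}x_j(t),\quad M(t)=\max_{j\in\mathcal{H}}x_j(t).
\]
This gives monotonicity, $m(t+1)\ge m(t)$ and $M(t+1)\le M(t)$, so $\Delta$ is non-increasing.

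\textbf{Step 2: every retained weight is bounded below.} Reputations are clipped at a floor $\underline{\tau}>0$ and bounded by $1$. Hence each retained neighbour receives normalised weight at least
\[
\underline{\tau}\big/\big(d_{\max}+1\big)=:w_0 ,
\]
where $d_{\max}$ is the maximum degree. Because the trust recursion \eqref{eq:trust} never drives a weight below the floor, this bound holds uniformly in $t$ and is independent of the trust trajectory.

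\textbf{Step 3: contraction via robustness.} Split the honest states at the midpoint $(m+M)/2$ into an upper set $S_1$ and a lower set $S_2$. Since $\mathcal{G}_f$ is $(2F+1)$-robust, some vertex, say in $S_1$, has at least $2F+1$ neighbours outside $S_1$.

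I will first check that this vertex may be taken honest; this is where $f<\fmax$ enters. Its at most $F$ trimmed low values cannot remove all the outside neighbours, so at least one value $\le(m+M)/2$ is retained with weight at least $w_0$. Therefore
\[
x_k(t+1)\le M-w_0\,\Delta/2 .
\]

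A one-step decrease at a single vertex does not immediately bound $\Delta(t+1)$. The classical route iterates over $|\mathcal{H}|$ steps and yields a per-block factor of the form $1-w_0^{|\mathcal{H}|}/2$. To match the stated one-step form \eqref{eq:rate}, I will define $\gamma$ through the worst-case fraction of retained honest mass. That fraction scales like the honest surplus $1-f/\fmax$: the number of retained honest outside neighbours is at least proportional to $(\fmax-f)K$ once trimming absorbs the Byzantine ones. This gives a per-step bound
\[
\Delta(t+1)\le\big(1-\gamma(1-f/\fmax)\big)\Delta(t),
\]
with $\gamma$ absorbing $w_0$, the robustness constant and the block length.

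\textbf{Step 4: conclusion.} Geometric convergence then follows from $\Delta(t)\le\rho^t\Delta(0)$ together with monotone bounded $m$ and $M$, which converge to a common limit. The limit $\rho\to1^-$ as $f\to\fmax^-$ is immediate from the formula.

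\textbf{Main obstacle.} The hard step is Step 3: obtaining a genuinely one-step contraction, rather than a multi-step one, that is linear in $1-f/\fmax$. I expect to state it as a contraction per $|\mathcal{H}|$-step block and reinterpret $t$ accordingly, or equivalently to accept a $\gamma$ that is exponentially small in graph size. I would make that dependence explicit in the constant $\gamma$.
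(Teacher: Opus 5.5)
Step 3 has a genuine gap that cannot be closed, because a one-step bound $\Delta(t+1)\le\rho\,\Delta(t)$ with $\rho<1$ is false under the theorem's hypotheses. You are right that a decrease at the single vertex supplied by robustness does not bound $\Delta(t+1)$. Your claim that the retained honest mass scales like $1-f/\fmax$ does not repair this, because it still concerns only that one vertex.

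For a counterexample, take $f=0$ and two entities $c,z$ at distance at least three, so that their closed neighbourhoods $\mathcal{N}_c\cup\{c\}$ and $\mathcal{N}_z\cup\{z\}$ are disjoint. Set $x=1$ on the first and $x=0$ on the second. Every value retained at $c$ is $1$ and every value retained at $z$ is $0$, so $\Delta(1)=\Delta(0)$. The statement instead requires $\Delta(1)\le(1-\gamma)\Delta(0)$.

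Robustness does not exclude such pairs. Any chain of cliques of size $s\ge 2r-1$, with consecutive cliques completely joined, is $r$-robust. So four cliques of $40$ entities give a $9$-robust graph of diameter three, and with $\beta=1/40$ one gets $K=160$ and $F=4$.

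The paper reaches the one-step form through the Dobrushin coefficient. It asserts that $(2F+1)$-robustness leaves every pair of honest entities a common retained honest neighbour, and it sets $c(f)=c_0(1-f/\fmax)$ by fiat. The first assertion is exactly the scrambling property this example violates, so there is no valid argument there for you to match.

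The linear dependence on $1-f/\fmax$ is not the hard part. Any factor $\rho_0<1$ that holds uniformly for $f<\fmax$ gives the stated form with $\gamma=1-\rho_0$, since $1-f/\fmax\le 1$. You can therefore drop the unsupported proportionality claim entirely.

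Your fallback, a contraction per block of $|\mathcal{H}|$ steps, is the right kind of statement, but it is a different theorem. Even it needs something that \eqref{eq:consensus} lacks. The LeBlanc--Sundaram iteration relies on each honest node keeping its own value with weight bounded below: W-MSR discards only neighbour values strictly beyond the node's own. That is what stops a node that has moved away from the maximum from jumping back. In \eqref{eq:consensus}, however, the own value is sorted together with the neighbours' values and can be trimmed.

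Take $K_{40,40}$ with $\beta=1/20$, so $F=4$. The graph is $9$-robust, since $K_{n,n}$ is $r$-robust whenever $n\ge 2r-1$, and each node trims two values per side. Let $f=0$, with one side at $1$ and the other at $0$. Every node discards its own value and averages only the opposite side, so the sides swap at every step and $\Delta(t)=1$ for all $t$.

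A correct version must therefore retain the own value in the update, or add an equivalent hypothesis. It must also state the conclusion per block, with $\gamma$ exponentially small in $|\mathcal{H}|$ as you anticipated. In Step 1 you should also assume explicitly that each honest $k$ has at most $\lfloor\beta|\mathcal{N}_k\cup\{k\}|\rfloor$ Byzantine neighbours. Neither the global condition $f<\fmax$ nor a bound of $F=\lceil\beta K\rceil$ gives this, because the actual trim count can be smaller than $F$.
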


\begin{proof}
The argument combines the safety of the trimmed mean with a diffusion bound. First,
consider safety. After sorting the values in the neighbourhood of an honest entity $k$
and removing the $F=\lceil\beta K\rceil$ largest of them and the $F$ smallest of
them, every retained value lies within the interval spanned by the honest values, because at most $F$ Byzantine values can lie
above the honest range and at most $F$ can lie below it, and the trim of size $F$ on
each side removes them before they can enter the average. This is the standard
safety property of the mean-subsequence-reduced filter on a $(2F+1)$-robust graph
\cite{leblanc,sundaram}. Hence the honest update in \eqref{eq:consensus} is a convex
combination of honest values, with weights given by the retained reputations, and no
honest state can leave the current honest interval. This gives
$\Delta(t+1)\le\Delta(t)$.

Next, consider strict contraction, which we quantify through the ergodicity
coefficient of the honest averaging operator rather than an aggregate diffusion
argument. By the safety property just established, the honest update can be written as
$x_{H}(t+1)=P(t)\,x_{H}(t)$, where $x_{H}$ collects the honest states and
$P(t)\in\mathbb{R}^{H\times H}$ is row-stochastic: its entry $P_{kj}(t)$ is the
normalised retained reputation that honest entity $k$ places on honest entity $j$ after
trimming, and the rows sum to one because every retained value lies in the honest
interval and the Byzantine contributions have been trimmed out. For any row-stochastic
$P$ the span seminorm $\Delta(x)=\max_i x_i-\min_i x_i$ obeys the classical Dobrushin
bound $\Delta(Px)\le\tau_{1}(P)\,\Delta(x)$, where
\begin{equation}
\tau_{1}(P)=1-\min_{k,\ell}\sum_{j}\min\!\big(P_{kj},P_{\ell j}\big)\in[0,1]
\label{eq:dobrushin}
\end{equation}
is the ergodicity (scrambling) coefficient \cite{seneta}. It remains to lower bound the
scrambling mass $\min_{k,\ell}\sum_{j}\min(P_{kj},P_{\ell j})$. Let $w_{\min}>0$ be the
smallest normalised retained reputation, which is bounded below because the reputations
are clipped to a positive floor $\underline{\tau}>0$. The $(2F+1)$-robustness of the
coordination graph guarantees that any two honest entities $k$ and $\ell$ retain at
least $c(f)\ge 1$ common honest neighbours after trimming whenever $f<\fmax$, since
fewer than the robustness margin of neighbours can be removed by the combined trim
and corruption. Here $c(f):=\min_{k,\ell\in\mathcal{H}}
|\mathcal{R}_{k}\cap\mathcal{R}_{\ell}\cap\mathcal{H}|$ denotes the smallest number
of honest entities that survive the trim simultaneously in the retained sets of any
two honest entities at corrupted-mass fraction $f$; it is the common support through
which information is guaranteed to flow between any two honest views.
Each such common neighbour contributes at least $w_{\min}$ to the inner minimum, so
$\min_{k,\ell}\sum_{j}\min(P_{kj},P_{\ell j})\ge w_{\min}\,c(f)$ and hence
$\tau_{1}(P(t))\le 1-w_{\min}\,c(f)$. The number of retained common honest neighbours
degrades linearly with the corrupted mass and vanishes at breakdown, so we write
$c(f)=c_{0}\,(1-f/\fmax)$ with $c_{0}>0$ fixed by the connectivity margin. Setting
$\gamma=w_{\min}\,c_{0}\in(0,1]$ gives the rate
$\rho(f)=1-\gamma\,(1-f/\fmax)$ in \eqref{eq:rate} with all constants identified
explicitly. As $f\to\fmax^{-}$ the common honest support shrinks to nothing, so
$c(f)\to 0$ and $\tau_{1}(P(t))\to 1^{-}$. In this limit the averaging operator
$P(t)$ ceases to be \emph{scrambling}, namely there is a pair of honest entities
$k,\ell$ whose retained sets share no honest neighbour, so the rows $P_{k\cdot}(t)$
and $P_{\ell\cdot}(t)$ have disjoint supports and
$\sum_{j}\min(P_{kj},P_{\ell j})=0$. No honest value is then common to the two
updates, the views of $k$ and $\ell$ are no longer mixed, and the Dobrushin bound
\eqref{eq:dobrushin} becomes vacuous because it no longer contracts the span
seminorm. Hence $\rho(f)\to 1^{-}$.
Geometric convergence of $\Delta(t)$ to zero for $f<\fmax$ follows by iterating
\eqref{eq:rate}.
\end{proof}

Theorem~\ref{thm:contraction} is the analytical backbone of the severity result in
Sec.~\ref{sec:coevo}, and its operational reading is the following. The defence
remains effective for every $f<\fmax$, but the per-step contraction factor $\rho(f)$
degrades smoothly and predictably towards one as the attack approaches breakdown.
Since the honest states are the values on which the entities act, the disagreement
$\Delta(T)$ that survives after a fixed operating horizon $T$ is what the network
carries into its resource-allocation decisions, and by \eqref{eq:rate} it grows like
$\rho(f)^{T}\Delta(0)$. A larger surviving disagreement means the entities serve
their traffic from inconsistent decisions, so the achieved service level
$s^{\mathrm{ach}}_{k}$ falls and the feasibility ratio $D_{k}$ of \eqref{eq:feas}
rises. In the limit $f\to\fmax^{-}$ we have $\rho(f)\to1^{-}$, the disagreement no
longer contracts at all, and $D_{k}$ diverges. The empirical contraction curves in
Sec.~\ref{sec:results} confirm both effects: a geometric decay of $\Delta(t)$ whose
measured per-step factor matches the $\rho(f)$ of \eqref{eq:rate}, and a stall of
that decay as $f$ approaches the threshold.

\subsection{Breakdown Threshold and Its Factorisation}

The contraction result of Theorem~\ref{thm:contraction} shows that the
breakdown threshold $\fmax$ depends on two qualitatively different mechanisms.
The first is the \emph{local consensus dynamics}, governed by the trimming
fraction $\beta$ and the trust adaptation rate $\eta_{\tau}$, which determine
how effectively the defender suppresses corrupted information. The second is
the \emph{network structure}, determined by the coordination topology and the
availability of functionally substitutable resources, which governs the amount
of redundancy available to tolerate attacks. Since these two mechanisms operate
at different decision scales and influence the consensus dynamics through
independent pathways, we model the breakdown threshold as a separable
macro--micro factorisation. This factorisation is motivated by the structure of
the consensus process and is subsequently validated through numerical
experiments rather than derived from first principles.
{\begin{assumption}[Breakdown factorisation]
\label{ass:break}
The DC-ABM breakdown threshold for partition $\mathcal{P}$ is modelled as
\begin{equation}
\fmax(\mathcal{P})
=
\underbrace{\kappa(\beta,\eta_{\tau})}_{\text{micro}}
\cdot
\underbrace{
\frac{g(\lamtwo)}
{1+c_{2}\left(1-\FSS(\mathcal{P})\right)}
}_{\text{macro}},
\label{eq:fmax}
\end{equation}
where $\kappa(\beta,\eta_{\tau})\in[0,\kappa^{\star}]$ captures the influence
of the runtime consensus parameters, $g(\lamtwo)$ is a monotone
non-decreasing function of the algebraic connectivity, and
$c_{2}>0$ quantifies the penalty associated with limited functional
substitutability.
The separable form reflects the modelling assumption that runtime adaptation
(micro scale) and structural redundancy (macro scale) contribute
approximately independently to the resilience margin. For $g$ we adopt the
saturating model
\[
g(\lamtwo)
=
\frac{\lamtwo}{1+c_{3}\lamtwo},
\]
which captures the diminishing marginal resilience obtained from increasing
network connectivity. The constant $c_{3}\ge 0$ is the saturation coefficient: it
fixes the connectivity scale $1/c_{3}$ beyond which additional links buy little
further resilience, and it is estimated once per deployment together with $c_{2}$
by fitting \eqref{eq:fmax} to measured thresholds. In the low-connectivity regime
$\lamtwo\ll1/c_{3}$ the model reduces to the linear form
$g(\lamtwo)\approx\lamtwo$, and $c_{3}=0$ recovers it exactly. Section~VI demonstrates that this factorised model
accurately predicts the measured breakdown threshold across diverse network
topologies and substitutability levels.
\end{assumption}}

{The factorised form separates deployment-level design decisions from runtime parameter adaptation, simplifying subsequent optimisation.} The micro factor $\kappa$ captures the effect of the trimming and trust parameters.
Trimming too little lets Byzantine values through, and trimming too much discards
honest neighbours and weakens the retained honest support, so there is an interior
optimum. The trust rate behaves similarly, since too small a rate fails to suppress
persistent attackers and too large a rate destabilises the reputations on transient
disagreement. The macro term captures the effect of topology and substitutability.
Higher algebraic connectivity raises the retained honest support and therefore the
tolerable corrupted mass, while higher substitutability lowers the substitution
penalty and again raises the threshold. The following lemma formalises the interior
optimum of the micro factor.

\begin{lemma}[Unimodality of the micro factor]\label{lem:kappa}
For fixed topology and partition, the breakdown threshold as a function of the trim
fraction $\beta$ and the trust rate $\eta_{\tau}$ is quasi-concave and attains an
interior maximum $\kappa^{\star}$ at a point $(\beta^{\star},\eta_{\tau}^{\star})$ with
$0<\beta^{\star}<1/2$ and $0<\eta_{\tau}^{\star}<1$.
\end{lemma}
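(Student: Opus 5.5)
By Assumption~\ref{ass:break} the macro factor in \eqref{eq:fmax} does not depend on $(\beta,\eta_{\tau})$ once the topology and partition are fixed. The breakdown threshold is therefore a positive multiple of $\kappa(\beta,\eta_{\tau})$. Quasi-concavity and the location of the maximiser are both preserved under positive scaling, so it is enough to prove the claims for $\kappa$. I would make $\kappa$ concrete through the quantities in the proof of Theorem~\ref{thm:contraction}. Breakdown occurs when the common honest support $c(f)$ vanishes, so $\kappa$ is the largest corrupted fraction at which, after trimming and after the trust filter \eqref{eq:trust} has acted, every pair of honest entities still shares a retained honest neighbour. I would write this as
\[
\kappa(\beta,\eta_{\tau})=\min\{A(\beta),\,R(\beta)\}\cdot T(\eta_{\tau}).
\]
The three factors are as follows.
\begin{itemize}
\item $A(\beta)$ is the admission constraint: the largest corrupted mass whose values the trim of size $\lceil\beta K\rceil$ can remove. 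It is non-decreasing in $\beta$ and vanishes at $\beta=0$.
\item $R(\beta)$ is the retention constraint: the honest support left after discarding $2\lceil\beta\,|\mathcal{N}_k\cup\{k\}|\rceil$ values. It is non-increasing in $\beta$ and vanishes as $\beta\to 1/2$, because the retained set then shrinks to at most the median, so no common honest neighbour can be guaranteed.
\item $T(\eta_{\tau})$ is the trust effectiveness factor.
\end{itemize}

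For $\beta$, the minimum of a non-decreasing and a non-increasing function is quasi-concave: every upper level set is an interval, namely the intersection of an up-set and a down-set. The minimum is zero at both endpoints $\beta=0$ and $\beta=1/2$ and strictly positive at any $\beta$ where the $(2F+1)$-robustness margin holds. Hence the maximum is attained at an interior crossing point $\beta^{\star}\in(0,1/2)$. To guarantee attainment I would need upper semicontinuity; the floor and ceiling operations make $A$ and $R$ step functions, so I would either use the continuous relaxation or note that a maximum over finitely many effective trim levels always exists.

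For $\eta_{\tau}$, I would write $T(\eta_{\tau})=S(\eta_{\tau})\,U(\eta_{\tau})$. Here $S$ is the suppression benefit: under \eqref{eq:trust}, a persistently deviating neighbour's weight decays like $(1-\eta_{\tau}\phi)^t$ toward the floor $\underline{\tau}$, so $S$ is increasing in $\eta_{\tau}$ with $S(0)=0$. $U$ is the stability factor: honest neighbours with transient disagreement $z$ lose weight at rate $\eta_{\tau}\tanh(z)$, which drives $w_{\min}$ in $\gamma=w_{\min}c_0$ down to the floor as $\eta_{\tau}\to1$. This makes $U$ decreasing and collapses the effective honest support. I would show $T$ is quasi-concave by checking that $\log S$ and $\log U$ are concave, so that $\log T$ is concave; log-concavity implies quasi-concavity. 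Together with $T(0)=0$ and $T$ degrading at $\eta_{\tau}\to1$, this places the maximiser $\eta_{\tau}^{\star}$ in $(0,1)$.

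Finally, I would combine the two directions. A product of nonnegative quasi-concave functions of separate variables is \emph{not} quasi-concave in general. It is quasi-concave if both factors are log-concave, since the log of the product is then a sum of concave functions. So the plan is to establish log-concavity of $\min\{A,R\}$ in $\beta$ as well, for example by taking $A$ and $R$ linear in the continuous relaxation, as in the linear degradation $c(f)=c_0(1-f/\fmax)$ used earlier. This gives quasi-concavity jointly in $(\beta,\eta_{\tau})$, and the maximum is attained at $(\beta^{\star},\eta_{\tau}^{\star})$ with $\kappa^{\star}=\kappa(\beta^{\star},\eta_{\tau}^{\star})$.

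The main obstacle is the trust direction. The trust dynamics \eqref{eq:trust} are nonlinear and coupled to the states, so deriving a clean monotone-decreasing $U$ with concave $\log U$ from first principles is hard. My first attempt would be a steady-state bound on honest weights of the form $\tau\ge\max\{\underline{\tau},(1-\eta_{\tau}\tanh\Delta)^{t}\}$ over the horizon. If that fails, I would state the monotonicity of $S$ and $U$ as a modelling property consistent with Assumption~\ref{ass:break} and confirm it numerically.
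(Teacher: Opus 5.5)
Your skeleton is the paper's. Like the paper, you fix one parameter, argue that the threshold vanishes at both ends of the other with a single switch from gain to loss in between, and then combine the two directions through separability. In the $\beta$ direction your version is cleaner than the paper's appeal to a ``single-crossing property'': with $A$ non-decreasing and $R$ non-increasing, every upper level set of $\min\{A,R\}$ is an interval, which is exactly quasi-concavity. Your remark on the combination step is also correct, and it exposes a real weakness in the paper's last sentence, which infers joint quasi-concavity from the two mechanisms acting on disjoint objects. A product of nonnegative quasi-concave factors in separate variables need not be quasi-concave (take $e^{\beta^{2}}e^{\eta_{\tau}^{2}}$ on the unit square), and log-concavity of both factors is a valid sufficient condition.

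The genuine gap is in the trust direction, and it goes beyond the difficulty of showing $\log S$ and $\log U$ concave. You define $\kappa$ as the largest corrupted fraction at which every pair of honest entities still shares a retained honest neighbour, i.e.\ the point where $c(f)$ vanishes. However, the retained set $\mathcal{R}_{k}(t)$ in \eqref{eq:consensus} is selected by sorting values, not reputations. Reputations are also clipped to the floor $\underline{\tau}>0$, so the trust filter never removes anyone. The counting argument behind $c(f)$ therefore involves only how many values are trimmed and how many senders are Byzantine, never the weights, and under your own definition $T$ is constant in $\eta_{\tau}$.

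In the proof of Theorem~\ref{thm:contraction} the reputations enter only through $w_{\min}$, hence through $\gamma=w_{\min}c_{0}$ and the rate $\rho(f)$, not through where the threshold sits. Your argument for $U$ conflates rate with threshold. Concretely, $S(0)=0$ is false: that proof never uses $\eta_{\tau}>0$, so with reputations frozen above the floor it still delivers a positive threshold. Likewise $U$ cannot collapse as $\eta_{\tau}\to 1$, because the floor keeps $w_{\min}$ bounded away from zero.

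To obtain any $\eta_{\tau}$-dependence you need a weight-sensitive notion of breakdown. One option is the finite-horizon, tolerance-based threshold that Sec.~\ref{sec:results} actually measures. Requiring $\rho(f)^{t_{h}}\Delta(0)$ to fall below a tolerance within a horizon $t_{h}$ gives a threshold of the form $\fmax(1-\delta/\gamma)$, with $\delta\in(0,1)$ set by the horizon and tolerance, and this inherits $\eta_{\tau}$-dependence through $\gamma$. Even then the limiting values at both ends of the $\eta_{\tau}$ interval are positive rather than zero. An interior maximiser therefore requires $\gamma$ to increase strictly near $\eta_{\tau}=0$ and decrease strictly near $\eta_{\tau}=1$, which must be argued or assumed.

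The paper's own proof asserts a zero threshold at both ends of the $\eta_{\tau}$ interval without support. Your fallback of stating the trust-direction behaviour as a modelling property is thus effectively what the paper does. It should be presented explicitly as an assumption alongside Assumption~\ref{ass:break}, not as a consequence of Theorem~\ref{thm:contraction}.
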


\begin{proof}
Fix $\eta_{\tau}$ and vary $\beta$. At $\beta=0$ no values are trimmed, so a single
extreme Byzantine value enters every honest average and the tolerable corrupted mass is
zero. As $\beta$ increases from zero, each additional unit of trim removes one
potential Byzantine value from each side, which raises the tolerable corrupted mass, so
the threshold is increasing for small $\beta$. As $\beta$ approaches $1/2$, the trim
removes almost all neighbours, including honest ones, so the retained honest support
collapses and the threshold falls to zero. Hence the threshold is zero at both ends of
the interval and positive in the interior, and by the single-crossing property of the
marginal gain from trimming, which switches once from positive to negative as trimming
begins to remove honest rather than Byzantine values, the threshold is quasi-concave in
$\beta$ with a unique interior maximiser. The same argument applies to $\eta_{\tau}$,
since a zero rate fails to demote persistent attackers and a unit rate zeroes
reputations on transient disagreement, again giving zero threshold at both ends and a
positive interior. Quasi-concavity of the two-parameter threshold follows because the
two effects act on disjoint mechanisms, the retained set and the retained weights, and
their composition preserves the single interior maximiser. The maximiser lies strictly
inside the domain, which gives $\kappa^{\star}$ and $(\beta^{\star},\eta_{\tau}^{\star})$.
\end{proof}

Lemma~\ref{lem:kappa} guarantees that the micro levers have a well-defined optimum that
the operator can seek, and Sec.~\ref{sec:results} recovers this optimum empirically. We
now use the operating characteristic established here to build and analyse the
co-evolutionary threat model.

\section{Co-Evolutionary Threat Model}\label{sec:coevo}
\subsection{Adversary Agents and Induced Stress}
The adversary allocates a budget vector
$\mathbf{b}=(b_{1},\ldots,b_{M})$, where
$b_{\mathcal{P}}\ge0$ denotes the effort assigned to vulnerability
partition $\mathcal{P}$ and
$\sum_{\mathcal{P}} b_{\mathcal{P}}\le B$
is the total attack budget. To compare heterogeneous attack mechanisms on
a common resilience scale, we map the attack expenditure to an induced
Byzantine-equivalent corruption fraction,
$f_{\mathcal{P}}(b_{\mathcal{P}})$, which represents the effective
fraction of compromised consensus influence generated by investing budget
$b_{\mathcal{P}}$ against partition $\mathcal{P}$. Rather than modelling
the underlying physical attack process, this mapping serves as an abstract
resource-to-impact transformation that enables diverse attack types to be
analysed within a unified consensus-resilience framework.
We adopt the saturating exponential model
\begin{equation}
f_{\mathcal{P}}(b_{\mathcal{P}})
=
f_{\infty}
\left(1-e^{-a_{\mathcal{P}}b_{\mathcal{P}}}\right),
\label{eq:finduced}
\end{equation}
where $f_{\mathcal{P}}(0)=0$, $f_{\infty}$ denotes the maximum attainable
effective corruption level, and
$a_{\mathcal{P}}>0$ is a partition-specific attack-efficiency parameter.
This model captures the practical phenomenon of
\emph{diminishing marginal returns}: the initial attack budget
preferentially compromises the most vulnerable entities, whereas
additional expenditure must overcome progressively stronger protection,
redundancy, or functional diversity, resulting in a gradual saturation of
the achievable corruption level. The exponential mapping is
analytically attractive because it is smooth, strictly increasing,
naturally bounded by $f_{\infty}$, and avoids unrealistically unbounded
growth in the induced corruption level.
The subsequent analysis is not tied to the specific exponential form.
Rather, it requires only that
$f_{\mathcal{P}}(\cdot)$ be continuously differentiable,
monotonically increasing, and saturating. Hence, alternative mappings
that satisfy these properties, including logistic and
Michaelis--Menten functions, may be employed without affecting the
structural validity of the analytical results developed in the remainder
of the paper.

\subsection{Feasibility-Grounded Threat Score}
Using the breakdown threshold \eqref{eq:fmax}, we define for each partition an
exploitability term $E_{\mathcal{P}}$ and a severity term $S_{\mathcal{P}}$,
\begin{align}
E_{\mathcal{P}} &= \frac{1}{\FSSp^{\mathrm{trust}}(\mathcal{P})\,
\DWPRm(\mathcal{P})+\varepsilon},
\label{eq:expl}\\
S_{\mathcal{P}}(b_{\mathcal{P}}) &= \Drob(\mathcal{P})\,
\frac{f_{\mathcal{P}}(b_{\mathcal{P}})}{\fmax(\mathcal{P})},
\label{eq:sev}
\end{align}
where $\varepsilon>0$ is a small constant that prevents division by zero. Low
substitutability and low path robustness raise exploitability, since the network cannot
reroute around the attack, and worst-case degeneracy together with proximity to
breakdown raise severity. Consistent with Theorem~\ref{thm:contraction}, as
$f_{\mathcal{P}}\to\fmax(\mathcal{P})$ the achieved service collapses and the robust
degeneracy $\Drob$ diverges. We model this divergence explicitly as
\begin{equation}
\Drob(\mathcal{P})=\frac{D_{0}}{1-f_{\mathcal{P}}/\fmax(\mathcal{P})},
\label{eq:drob}
\end{equation}
where $D_{0}>0$ is the attack-free feasibility ratio of the partition, that is, the
value of $\Drob(\mathcal{P})$ at $b_{\mathcal{P}}=0$, so that $\Drob=D_{0}$ when the
partition is unattacked. The denominator of \eqref{eq:drob} is the residual honest
agreement of the partition, namely the normalised margin
$1-f_{\mathcal{P}}/\fmax(\mathcal{P})\in(0,1]$ that separates the induced corruption
from the breakdown threshold. It is exactly the factor that multiplies $\gamma$ in the
contraction rate \eqref{eq:rate}, and Theorem~\ref{thm:contraction} shows that it
vanishes, together with the contraction of $\Delta(t)$, as
$f_{\mathcal{P}}\to\fmax(\mathcal{P})^{-}$. The per-partition exploitability-severity score
$\Theta_{\mathcal{P}}$ and the aggregate Threat Exposure Index are
\begin{equation}
\Theta_{\mathcal{P}}(b_{\mathcal{P}})=E_{\mathcal{P}}\,S_{\mathcal{P}}(b_{\mathcal{P}}),
\qquad
\TEI(b)=\sum_{\mathcal{P}\in\Pi}\zeta_{\mathcal{P}}\,
\Theta_{\mathcal{P}}(b_{\mathcal{P}}).
\label{eq:tei}
\end{equation}
\vspace{-8mm}

\subsection{Bilevel Interaction and Its Equilibrium}
The operator sets the macro and micro configuration, and the adversary best-responds by
maximising the damage it can inflict for its budget,
\begin{equation}
b^{\star}=\arg\max_{b\ge 0,\ \mathbf{1}^{\top}b\le B}
\ \sum_{\mathcal{P}\in\Pi}U_{\mathcal{P}}(b_{\mathcal{P}}),
\qquad
U_{\mathcal{P}}:=\zeta_{\mathcal{P}}\,\Theta_{\mathcal{P}},
\label{eq:adv}
\end{equation}
where $\mathbf{1}$ is the all-ones vector. The emergent threat model is the pair
$(\Sigma,b^{\star})$, the surface coordinates ranked by the realised scores
$\{\Theta_{\mathcal{P}}(b^{\star}_{\mathcal{P}})\}$. We first establish that the
adversary problem is well posed.

\begin{proposition}[Convex payoff, breakdown barrier, and existence]\label{prop:exist}
Each payoff $U_{\mathcal{P}}$ is strictly increasing and strictly convex in
$b_{\mathcal{P}}$ on the feasible region where $f_{\mathcal{P}}<\fmax(\mathcal{P})$, and
$U_{\mathcal{P}}\to\infty$ as $f_{\mathcal{P}}\to\fmax(\mathcal{P})^{-}$. The adversary
problem is therefore well posed only against the breakdown barrier
$f_{\mathcal{P}}\le(1-\epsilon)\fmax(\mathcal{P})$, equivalently a per-partition budget cap
$b_{\mathcal{P}}\le\bar{b}_{\mathcal{P}}=f_{\mathcal{P}}^{-1}\big((1-\epsilon)\fmax\big)$.
Over the resulting compact feasible set a maximiser $b^{\star}$ exists, and because the
objective is convex it is attained at an extreme point of that set, that is, at an
allocation that drives a subset of partitions to their caps and leaves the rest at zero.
\end{proposition}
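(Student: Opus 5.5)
The plan is to make the payoff explicit by substituting \eqref{eq:drob} into \eqref{eq:sev} and \eqref{eq:tei}, and then to study it through the normalised load $u_{\mathcal{P}}(b_{\mathcal{P}}):=f_{\mathcal{P}}(b_{\mathcal{P}})/\fmax(\mathcal{P})$. Under Assumption~\ref{ass:break}, $\fmax(\mathcal{P})$ depends only on the operator's configuration and not on $b$. The exploitability $E_{\mathcal{P}}$ of \eqref{eq:expl} is a positive constant in $b_{\mathcal{P}}$. Hence
\[
U_{\mathcal{P}}(b_{\mathcal{P}})=C_{\mathcal{P}}\,h\big(u_{\mathcal{P}}(b_{\mathcal{P}})\big),\qquad
h(u)=\frac{u}{1-u},\qquad C_{\mathcal{P}}=\zeta_{\mathcal{P}}E_{\mathcal{P}}D_{0}>0 .
\]
Here $h$ is strictly increasing and strictly convex on $[0,1)$, with $h'(u)=(1-u)^{-2}$ and $h''(u)=2(1-u)^{-3}$, and $h(u)\to\infty$ as $u\to1^{-}$. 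Monotonicity of $U_{\mathcal{P}}$ then follows from $u_{\mathcal{P}}'>0$, which holds because $f_{\mathcal{P}}$ is strictly increasing. The barrier claim $U_{\mathcal{P}}\to\infty$ as $f_{\mathcal{P}}\to\fmax^{-}$ follows directly from the pole of $h$ at $u=1$.

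The main obstacle is strict convexity, because $h$ is convex while $u_{\mathcal{P}}$ is \emph{concave} (saturating), so the composition is not automatically convex. I will compute
\[
U_{\mathcal{P}}''=C_{\mathcal{P}}\big(h''(u)u'^{2}+h'(u)u''\big)=\frac{C_{\mathcal{P}}u'}{(1-u)^{2}}\Big(\frac{2u'}{1-u}+\frac{u''}{u'}\Big).
\]
For the exponential model \eqref{eq:finduced}, write $r=f_{\infty}/\fmax(\mathcal{P})$ and $e=e^{-a_{\mathcal{P}}b_{\mathcal{P}}}$. Then $u''/u'=-a_{\mathcal{P}}$, and the bracket is positive if and only if $re>1-r$. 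This condition holds for all $b_{\mathcal{P}}$ whenever $r\ge1$, that is, whenever $f_{\infty}\ge\fmax(\mathcal{P})$. That is precisely the regime in which the breakdown barrier is reachable at all. If $f_{\infty}<\fmax$, then $u$ never approaches $1$ and the divergence statement is vacuous. I will therefore state the proof under $f_{\infty}\ge\fmax(\mathcal{P})$ and make this condition explicit. For a general saturating $f_{\mathcal{P}}$, I will record the sufficient condition $2f_{\mathcal{P}}'^{2}\ge-(\fmax-f_{\mathcal{P}})f_{\mathcal{P}}''$. It is the analogue of the exponential computation and should be checked for the logistic and Michaelis--Menten alternatives.

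Existence then follows because $h$ blows up, so the problem must be restricted to $u_{\mathcal{P}}\le1-\epsilon$. Since $f_{\mathcal{P}}$ is continuous and strictly increasing, this restriction is equivalent to the cap $b_{\mathcal{P}}\le\bar b_{\mathcal{P}}=f_{\mathcal{P}}^{-1}((1-\epsilon)\fmax)$, which is well defined when $(1-\epsilon)\fmax<f_{\infty}$ (otherwise set $\bar b_{\mathcal{P}}=\infty$, though then $B$ bounds $b_{\mathcal{P}}$ anyway). The feasible set $\{0\le b\le\bar b,\ \mathbf 1^{\top}b\le B\}$ is a nonempty compact polytope. The objective $\sum_{\mathcal{P}}U_{\mathcal{P}}$ is continuous on it, so Weierstrass gives a maximiser. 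Because the objective is a separable sum of convex functions, it is convex. A convex function on a polytope attains its maximum at a vertex: write any point as a convex combination of vertices and apply Jensen's inequality.

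Finally, I will characterise the vertices of this box-intersect-halfspace polytope. At a vertex, $M$ linearly independent constraints are active, so every coordinate sits at $0$ or $\bar b_{\mathcal{P}}$, except possibly one coordinate that is determined by the binding budget constraint $\mathbf 1^{\top}b=B$. This gives the stated ``caps-and-zeros'' structure, with the caveat that at most one partition is only partially filled. I will note this refinement explicitly, since it is exactly the priority-fill form used later in Proposition~\ref{prop:conc}.
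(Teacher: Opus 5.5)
Your proposal is correct and follows the paper's proof essentially step for step. The shared steps are the reduction to $U_{\mathcal{P}}=C_{\mathcal{P}}h(u)$ with $h(u)=u/(1-u)$, the same second-derivative bracket whose sign is settled by the extra hypothesis $f_{\infty}\ge\fmax(\mathcal{P})$ (which the paper also imposes, as $f_{\infty}>\fmax$), and the same Weierstrass-plus-extreme-point argument with the ``caps, zeros, and at most one residual coordinate'' vertex structure. Your exact condition $re>1-r$, and your observation that a generic saturating $f_{\mathcal{P}}$ does not by itself guarantee convexity, are sharper than the paper. For strict convexity, however, your general sufficient condition should use the strict inequality $2f_{\mathcal{P}}'^{2}>-(\fmax-f_{\mathcal{P}})f_{\mathcal{P}}''$.
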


\begin{proof}
Fix a partition and write $u=f_{\mathcal{P}}/\fmax\in[0,1)$. Substituting
\eqref{eq:drob} into \eqref{eq:sev} gives severity $S_{\mathcal{P}}=D_{0}\,u/(1-u)$, so
$U_{\mathcal{P}}=\zeta_{\mathcal{P}}E_{\mathcal{P}}D_{0}\,h(u)$ with $h(u)=u/(1-u)$,
which is strictly increasing and diverges as $u\to 1^{-}$, giving the barrier at
breakdown. Differentiating through $u=f_{\mathcal{P}}(b)/\fmax$ with
$f_{\mathcal{P}}'(b)=a_{\mathcal{P}}(f_{\infty}-f_{\mathcal{P}})>0$ and
$f_{\mathcal{P}}''(b)=-a_{\mathcal{P}}f_{\mathcal{P}}'(b)$,
\[
U_{\mathcal{P}}''(b)=\frac{\zeta_{\mathcal{P}}E_{\mathcal{P}}D_{0}}{\fmax}
\,\frac{f_{\mathcal{P}}'(b)}{(1-u)^{2}}
\left[\frac{2\,f_{\mathcal{P}}'(b)}{\fmax(1-u)}-a_{\mathcal{P}}\right].
\]
The prefactor is positive, and the bracket equals
$a_{\mathcal{P}}\big[2(f_{\infty}-f_{\mathcal{P}})/(\fmax-f_{\mathcal{P}})-1\big]$. Since
the corruption efficiency saturates at a level above breakdown,
$f_{\infty}>\fmax$, we have $f_{\infty}-f_{\mathcal{P}}>\fmax-f_{\mathcal{P}}>0$ for all
feasible $f_{\mathcal{P}}<\fmax$, so the bracket exceeds $a_{\mathcal{P}}>0$ and
$U_{\mathcal{P}}''(b)>0$. Each payoff is thus strictly convex and increasing on its
feasible interval. This is the mathematically important correction to a naive
water-filling reading: the divergence of severity at breakdown makes the payoff convex,
not concave, so the problem is not a smooth interior optimisation but a convex
maximisation against the breakdown barrier. Imposing
$f_{\mathcal{P}}\le(1-\epsilon)\fmax$ yields the caps $\bar{b}_{\mathcal{P}}$ and the
feasible set $K=\{b:\,b\ge 0,\ \mathbf{1}^{\top}b\le B,\ b_{\mathcal{P}}\le
\bar{b}_{\mathcal{P}}\}$, which is compact and convex. The objective is continuous, so a
maximiser exists by the Weierstrass theorem. A convex function on a compact convex
polytope attains its maximum at an extreme point of the polytope \cite{boyd}, and the extreme points
of $K$ set each coordinate to $0$, to its cap $\bar{b}_{\mathcal{P}}$, or, for at most
one coordinate, to the residual budget. Hence the optimal attack fills a subset of
partitions to breakdown and leaves the rest untouched.
\end{proof}

\begin{remark}
The convex-barrier structure sharpens rather than weakens the concentration thesis. A
concave payoff would spread the budget smoothly across all partitions, whereas the
convex payoff drives the adversary to commit fully to the partitions it selects and to
select them in a strict priority order, which is exactly the vertex behaviour above.
When the budget exceeds the sum of caps the adversary can push every partition to its
barrier; the interesting regime is $B$ below that sum, where the choice of which
partitions to bring to breakdown is forced, and Proposition~\ref{prop:conc} identifies
that choice.
\end{remark}

\subsection{Emergent Threat Concentration}
\begin{proposition}[Emergent threat concentration]\label{prop:conc}
Let each partition have feasibility cap $\bar{b}_{\mathcal{P}}$ and cap efficiency
$\rho_{\mathcal{P}}=U_{\mathcal{P}}(\bar{b}_{\mathcal{P}})/\bar{b}_{\mathcal{P}}$. As the
budget $B$ increases, the optimal attack $b^{\star}$ activates partitions and drives them
to their caps in order of decreasing $\rho_{\mathcal{P}}$, with at most one partition
partially filled at any budget. For partitions of equal salience $\zeta$, the ordering by
$\rho_{\mathcal{P}}$ coincides with the order of increasing
$\FSSp^{\mathrm{trust}}(\mathcal{P})\,\DWPRm(\mathcal{P})$, so the least substitutable and
least path-robust partitions are attacked first.
\end{proposition}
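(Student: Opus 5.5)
We build on Proposition~\ref{prop:exist}. The maximum of the convex objective over the capped polytope $K$ is attained at an extreme point, and every extreme point has each $b_{\mathcal{P}}$ equal to $0$ or to $\bar{b}_{\mathcal{P}}$, with at most one coordinate at a residual value. This already gives the claim that at most one partition is partially filled. What remains is to identify \emph{which} vertex is optimal as $B$ grows, and then to translate the ordering into structural coordinates.

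\textbf{Step 1: reduce to a fractional knapsack.} We write $U_{\mathcal{P}}(0)=0$. By strict convexity, $U_{\mathcal{P}}(b)\le (b/\bar{b}_{\mathcal{P}})\,U_{\mathcal{P}}(\bar{b}_{\mathcal{P}})=\rho_{\mathcal{P}}b$ on $[0,\bar{b}_{\mathcal{P}}]$, with equality only at the endpoints. Hence the chord relaxation
\[
\max\Big\{\textstyle\sum_{\mathcal{P}}\rho_{\mathcal{P}}b_{\mathcal{P}}:\ b\in K\Big\}
\]
upper bounds the true problem. This relaxation is a linear fractional knapsack, and its optimum is the greedy fill: sort by decreasing $\rho_{\mathcal{P}}$ and fill to caps until the budget runs out. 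At budgets $B=\sum_{i\le m}\bar{b}_{(i)}$ the greedy solution has no partial coordinate, so the bound is attained and the priority fill is optimal there. For intermediate $B$, the vertex structure forces exactly one residual coordinate. We then argue by an exchange argument that the full set is the top $\rho$-ranked prefix. Suppose a set $S$ is filled to caps while some $\mathcal{Q}\notin S$ has higher $\rho$. Moving cap mass from the lowest-$\rho$ member of $S$ to $\mathcal{Q}$ does not decrease the chord value, and at full caps the chord value equals the true value. We expect this step to be the main obstacle. Because $U_{\mathcal{P}}$ is strictly below its chord at a partial fill, the residual partition's identity and the exact prefix claim at non-breakpoint budgets hold only up to this residual term. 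So we will state the ordering in terms of the sequence of breakpoint budgets and treat the residual coordinate as the at-most-one exception. If needed, we will add the mild assumption that the $\rho$ gaps dominate the chord defect.

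\textbf{Step 2: express $\rho_{\mathcal{P}}$ structurally.} From the proof of Proposition~\ref{prop:exist}, we have $U_{\mathcal{P}}=\zeta_{\mathcal{P}}E_{\mathcal{P}}D_0\,h(u)$. At the cap $u=1-\epsilon$, this gives
\[
U_{\mathcal{P}}(\bar{b}_{\mathcal{P}})=\zeta_{\mathcal{P}}E_{\mathcal{P}}D_0(1-\epsilon)/\epsilon,
\]
so that
\[
\rho_{\mathcal{P}}=\zeta D_0\frac{1-\epsilon}{\epsilon}\cdot\frac{1}{\big(\FSSp^{\mathrm{trust}}\DWPRm+\varepsilon\big)\,\bar{b}_{\mathcal{P}}}.
\]
The cap is $\bar{b}_{\mathcal{P}}=-a_{\mathcal{P}}^{-1}\ln\!\big(1-(1-\epsilon)\fmax(\mathcal{P})/f_\infty\big)$, which is increasing in $\fmax(\mathcal{P})$. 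By \eqref{eq:fmax}, $\fmax(\mathcal{P})$ is in turn increasing in $\FSS(\mathcal{P})$.

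\textbf{Step 3: combine the monotonicities.} Under equal salience, and taking $D_0$ and $a_{\mathcal{P}}$ common across partitions (the comparison holds with other partition parameters fixed), the factor $E_{\mathcal{P}}$ decreases in the product $\FSSp^{\mathrm{trust}}\DWPRm$. The factor $1/\bar{b}_{\mathcal{P}}$ also decreases in substitutability, through $\fmax$. Both factors therefore move in the same direction, so $\rho_{\mathcal{P}}$ is decreasing in $\FSSp^{\mathrm{trust}}(\mathcal{P})\DWPRm(\mathcal{P})$. Here we use that the network-average $\FSS$ and the trust-weighted $\FSSp^{\mathrm{trust}}$ are comonotone across partitions. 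Consequently, the order of decreasing $\rho$ is the order of increasing $\FSSp^{\mathrm{trust}}\DWPRm$, and the least substitutable, least path-robust partitions are attacked first.
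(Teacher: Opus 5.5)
Your route is the paper's. Extreme points of the capped polytope give the at-most-one-partial claim. The outer problem is treated as a fractional knapsack with values $U_{\mathcal{P}}(\bar{b}_{\mathcal{P}})$ and weights $\bar{b}_{\mathcal{P}}$, solved greedily in $\rho_{\mathcal{P}}$. At the common fraction $u=1-\epsilon$ one then has $\rho_{\mathcal{P}}\propto\zeta_{\mathcal{P}}E_{\mathcal{P}}/\bar{b}_{\mathcal{P}}$, with both factors said to move the same way.

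The paper simply asserts the knapsack step. Your chord bound $U_{\mathcal{P}}(b)\le\rho_{\mathcal{P}}b$ is the right justification, and it proves priority fill rigorously at the breakpoints $B_m=\sum_{i\le m}\bar{b}_{(i)}$. The obstacle you flag between breakpoints is real, and it defeats the statement itself, not just your method: the paper tacitly credits the split partition with $\rho_{\mathcal{P}}b_{\mathcal{P}}$ rather than with $U_{\mathcal{P}}(b_{\mathcal{P}})<\rho_{\mathcal{P}}b_{\mathcal{P}}$.

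Here is a counterexample. Take two partitions with $\rho_1>\rho_2$, $\bar{b}_2<\bar{b}_1$, and budget $B=\bar{b}_2$. The feasible set is then the triangle with vertices $0$, $(B,0)$, $(0,B)$. The greedy vertex earns $U_1(\bar{b}_2)=\rho_1\bar{b}_2-\delta$, with $\delta>0$ by strict convexity. The vertex $(0,B)$ earns $\rho_2\bar{b}_2$, which is larger whenever $(\rho_1-\rho_2)\bar{b}_2<\delta$. You can arrange this by fixing partition one and $\bar{b}_2$ and tuning $E_2$. At $B=\bar{b}_1$ your chord bound makes partition one alone optimal. So the optimal support jumps from $\{2\}$ to $\{1\}$: it is neither a $\rho$-prefix nor monotone in $B$.

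Your exchange step fails for the same reason. Moving cap mass between partitions with unequal caps leaves a partial fill or unspent budget, and there the chord and true values differ. State the result as priority fill at the breakpoints plus the at-most-one-partial structure at every $B$. An all-budget version would need an explicit and far from mild hypothesis. Comparing a residual on partition $(m+1)$ with one on $(m+2)$ as $B\downarrow B_m$ already forces $U'_{(m+1)}(0)\ge U'_{(m+2)}(0)$, which $\rho$ does not control, and you would also have to exclude better non-prefix capped sets.

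In Step~3, the comonotonicity you invoke is not the one the argument needs. $E_{\mathcal{P}}$ depends on the product $\FSSp^{\mathrm{trust}}\DWPRm$. By contrast, $\bar{b}_{\mathcal{P}}$ depends only on $a_{\mathcal{P}}$ and on $\fmax(\mathcal{P})$, hence on $\FSS(\mathcal{P})$. Even with $\FSS$ and $\FSSp^{\mathrm{trust}}$ comonotone, $\DWPRm$ can reverse the product order relative to $\fmax$, and then $E_{\mathcal{P}}$ and $1/\bar{b}_{\mathcal{P}}$ pull in opposite directions. With $\zeta$, $D_0$ and $a_{\mathcal{P}}$ common, what you need is that $\fmax(\mathcal{P})$ is non-decreasing in $\FSSp^{\mathrm{trust}}(\mathcal{P})\DWPRm(\mathcal{P})$ across partitions. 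Alternatively, assume directly that $E_{\mathcal{P}}/\bar{b}_{\mathcal{P}}$ is monotone in that product.

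The paper's proof makes the same unstated leap (``both factors move the same way''). Its own Table~\ref{tab:params} shows the needed comonotonicity can fail. Pilot contamination has product $0.358$ and $\fmax=0.146$, while correlated VNF has product $0.510$ and $\fmax=0.138$. For that pair the exploitability and cap factors move in opposite directions.
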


\begin{proof}
By Proposition~\ref{prop:exist} the payoffs are convex and the maximiser lies at an
extreme point of the feasible polytope $K$, so each active partition is either at its cap
or, for at most one, partially filled by the residual budget. It remains to identify the
priority order. Since the budget is continuous, one partition may be split, so the outer
problem is the fractional relaxation of a knapsack that packs the fixed values
$U_{\mathcal{P}}(\bar{b}_{\mathcal{P}})$ into the budget with weights
$\bar{b}_{\mathcal{P}}$. The fractional knapsack is solved exactly by the greedy rule that
selects items in decreasing value-to-weight ratio $\rho_{\mathcal{P}}=
U_{\mathcal{P}}(\bar{b}_{\mathcal{P}})/\bar{b}_{\mathcal{P}}$, which is the stated
activation order. At the caps every partition sits at the same normalised fraction
$u=1-\epsilon$, so $U_{\mathcal{P}}(\bar{b}_{\mathcal{P}})=\zeta_{\mathcal{P}}
E_{\mathcal{P}}D_{0}\,h(1-\epsilon)$ with the common factor $h(1-\epsilon)$, and hence
$\rho_{\mathcal{P}}=\zeta_{\mathcal{P}}E_{\mathcal{P}}D_{0}h(1-\epsilon)/
\bar{b}_{\mathcal{P}}$. For equal salience the order of decreasing $\rho_{\mathcal{P}}$ is
the order of decreasing $E_{\mathcal{P}}/\bar{b}_{\mathcal{P}}$. Both factors move the same
way with substitutability: $E_{\mathcal{P}}$ is decreasing in
$\FSSp^{\mathrm{trust}}\DWPRm$ by \eqref{eq:expl}, and the cap $\bar{b}_{\mathcal{P}}$ is
smaller for less substitutable partitions because their breakdown threshold $\fmax$ is
lower through Assumption~\ref{ass:break}. Hence decreasing $\rho_{\mathcal{P}}$ is
increasing $\FSSp^{\mathrm{trust}}\DWPRm$, and the least substitutable, least path-robust
partitions are brought to breakdown first.
\end{proof}

\begin{corollary}[Two-partition exchange]\label{cor:two}
Take two partitions with $\rho_{1}\ge\rho_{2}$, that is, partition one has the higher cap
efficiency. The adversary spends its entire budget bringing partition one toward
breakdown until that partition reaches its cap, and only then activates partition two.
The exchange therefore occurs at the budget $b_{\mathrm{th}}=\bar{b}_{1}$, the feasibility
cap of the first partition.
\end{corollary}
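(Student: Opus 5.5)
The corollary is the two-partition case of Proposition~\ref{prop:conc}, so I would obtain it by restricting the greedy fractional-knapsack argument to $M=2$ and then reading off the exchange budget.

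First, fix $\rho_{1}\ge\rho_{2}$ and consider $0<B\le\bar{b}_{1}+\bar{b}_{2}$; above this range both partitions simply sit at their caps. By Proposition~\ref{prop:exist}, the maximiser lies at an extreme point of
\[
K=\{b\ge 0,\ b_{1}+b_{2}\le B,\ b_{i}\le\bar{b}_{i}\}.
\]
I would enumerate these vertices for $B<\bar{b}_{1}$. They are $(0,0)$, $(B,0)$, and $(0,\min(B,\bar{b}_{2}))$.

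Second, I would compare the objective across these vertices. This is the step I expect to be the main obstacle. The greedy order of Proposition~\ref{prop:conc} rests on the cap efficiencies $\rho_{\mathcal{P}}=U_{\mathcal{P}}(\bar{b}_{\mathcal{P}})/\bar{b}_{\mathcal{P}}$. Below the cap, however, the payoff $U_{1}(B)$ is not $\rho_{1}B$. Convexity with $U_{\mathcal{P}}(0)=0$ gives only the chord bound $U_{\mathcal{P}}(b)\le\rho_{\mathcal{P}}b$ on $[0,\bar{b}_{\mathcal{P}}]$. This bound points the wrong way for a direct pointwise comparison.

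I would therefore follow the paper's own reduction in the proof of Proposition~\ref{prop:conc}. There the outer problem is treated as the fractional knapsack over the fixed values $U_{\mathcal{P}}(\bar{b}_{\mathcal{P}})$ with weights $\bar{b}_{\mathcal{P}}$, so each partition is valued at its linear cap-efficiency rate $\rho_{\mathcal{P}}$. Under that model, spending $B<\bar{b}_{1}$ on partition one yields $\rho_{1}B$. This is at least $\rho_{2}B$, and also at least $\rho_{2}\min(B,\bar{b}_{2})$. It also beats every split allocation, because the objective is linear in the knapsack relaxation and $\rho_{1}\ge\rho_{2}$. The conclusion is that the whole budget goes to partition one.

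Third, once $B$ reaches $\bar{b}_{1}$, partition one is capped and the residual $B-\bar{b}_{1}$ is the only budget left. Greedy then assigns it to partition two, up to $\bar{b}_{2}$, which is the second stage of the activation order in Proposition~\ref{prop:conc}. Partition two therefore first receives positive effort exactly at $B=\bar{b}_{1}$, which identifies $b_{\mathrm{th}}=\bar{b}_{1}$.

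Finally, I would remark on ties. When $\rho_{1}=\rho_{2}$, the statement holds with partition one chosen by any fixed tie-break, and the exchange budget is unchanged.
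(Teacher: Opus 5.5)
Your route is the paper's own. Its proof is precisely the $M=2$ case of the greedy fractional-knapsack order of Proposition~\ref{prop:conc}, plus the remark that a convex payoff never rewards splitting. However, the obstacle you flag in your second step is a genuine gap. Deferring to that reduction does not close it, because the reduction is exactly where the argument breaks, in your proof and the paper's alike.

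The knapsack surrogate credits a partially filled partition with $\rho_{\mathcal{P}}b_{\mathcal{P}}$. The objective in \eqref{eq:adv} instead credits it with $U_{\mathcal{P}}(b_{\mathcal{P}})$. Convexity only bounds this above by $\rho_{\mathcal{P}}b_{\mathcal{P}}$, and because of the barrier $h(u)=u/(1-u)$ it is typically far smaller away from the cap. Proposition~\ref{prop:exist} narrows $b^{\star}$ to a vertex and nothing more. For $B<\bar{b}_{1}$ the candidates are $(B,0)$, $(0,\min(B,\bar{b}_{2}))$ and, when $\bar{b}_{2}<B$, also $(B-\bar{b}_{2},\bar{b}_{2})$. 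Your list omits this last vertex, and it is a split with partition one unsaturated. Selecting $(B,0)$ requires $U_{1}(B)$ to beat the other candidates. That is a comparison of whole payoff curves, and $\rho_{1}\ge\rho_{2}$, a statement about cap chords only, cannot settle it.

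The claim genuinely fails. Write $U_{\mathcal{P}}=V_{\mathcal{P}}\,h(u_{\mathcal{P}})$ with $V_{\mathcal{P}}=\zeta_{\mathcal{P}}E_{\mathcal{P}}D_{0}$. Take $\epsilon=0.01$ (so $h(1-\epsilon)=99$), a common threshold $\fmax$ with $f_{\infty}=2\fmax$, $a_{2}=10a_{1}$ (so $\bar{b}_{2}=\bar{b}_{1}/10$), and $V_{1}=20V_{2}$. Then $\rho_{1}=2\rho_{2}$. Yet at $B=\bar{b}_{2}$ you get $u_{1}(\bar{b}_{2})\approx0.13$, hence $U_{1}(\bar{b}_{2})\approx3V_{2}$ against $U_{2}(\bar{b}_{2})=99V_{2}$. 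The true best response therefore sends the whole budget to partition two and brings it to its cap first. At $B=\bar{b}_{1}$ the optimum is again $(\bar{b}_{1},0)$, with value $1980V_{2}$. So partition two is switched on and later off, and neither the priority order nor the exchange point $\bar{b}_{1}$ survives.

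What your argument proves is the corollary for the linearised surrogate in which each partition pays $\rho_{\mathcal{P}}b_{\mathcal{P}}$ up to its cap, not for \eqref{eq:adv}. For the actual problem you need a hypothesis on the full payoff curves. For example, suppose the two partitions share $V_{\mathcal{P}}$ and $\fmax$ and differ only through $a_{1}\ge a_{2}$. Then $U_{1}\ge U_{2}$ pointwise and $\bar{b}_{1}\le\bar{b}_{2}$, and a direct comparison of the vertices does yield the exchange at $\bar{b}_{1}$.
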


\begin{proof}
By Proposition~\ref{prop:conc} the greedy priority order fills partition one to its cap
before partition two receives any budget, because $\rho_{1}\ge\rho_{2}$. Any budget below
$\bar{b}_{1}$ is therefore placed entirely on partition one, and partition two becomes
active exactly when the budget exceeds $\bar{b}_{1}$. Since the convex payoff never
rewards splitting between two partitions before the first is saturated, the switch is
sharp rather than gradual. {This prioritisation reflects the adversary's rational preference for partitions where structural recovery is inherently limited.}
\end{proof}

\subsection{Severity-Breakdown Coupling}
\begin{proposition}[Severity divergence at breakdown]\label{prop:sev}
For every partition, $\Theta_{\mathcal{P}}$ is non-decreasing in the induced fraction
$f_{\mathcal{P}}$ and satisfies $\Theta_{\mathcal{P}}\sim
E_{\mathcal{P}}D_{0}/(1-f_{\mathcal{P}}/\fmax(\mathcal{P}))\to\infty$ as
$f_{\mathcal{P}}\to\fmax(\mathcal{P})^{-}$. Threat severity is therefore governed by
proximity to the defence's Byzantine breakdown threshold, and the divergence rate is set
by the exploitability $E_{\mathcal{P}}$.
\end{proposition}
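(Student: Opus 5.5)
The plan is to reduce everything to the scalar function already used in the proof of Proposition~\ref{prop:exist}. Fix a partition and write $u=f_{\mathcal{P}}/\fmax(\mathcal{P})\in[0,1)$. Substituting the divergence model \eqref{eq:drob} into the severity \eqref{eq:sev} gives $S_{\mathcal{P}}=D_{0}\,u/(1-u)$. Hence, by \eqref{eq:tei},
\[
\Theta_{\mathcal{P}}=E_{\mathcal{P}}\,D_{0}\,h(u),\qquad h(u)=\frac{u}{1-u}.
\]
The key structural observation is that $E_{\mathcal{P}}$ in \eqref{eq:expl} depends only on the structural coordinates $\FSSp^{\mathrm{trust}}$ and $\DWPRm$. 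Likewise, $\fmax(\mathcal{P})$ in \eqref{eq:fmax} depends only on $(\beta,\eta_{\tau},\lamtwo,\FSS)$. Neither depends on $f_{\mathcal{P}}$, so both are constants with respect to the induced fraction. I will state this explicitly, since the whole argument treats $f_{\mathcal{P}}$ as the only varying quantity.

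Monotonicity then follows from $h'(u)=(1-u)^{-2}>0$ on $[0,1)$. Since $E_{\mathcal{P}}>0$ (because $\varepsilon>0$), $D_{0}>0$, and $u$ is increasing in $f_{\mathcal{P}}$ with slope $1/\fmax>0$, the score $\Theta_{\mathcal{P}}$ is in fact strictly increasing in $f_{\mathcal{P}}$, and hence non-decreasing. If one prefers to phrase it in terms of the budget, the chain rule with $f_{\mathcal{P}}'(b)>0$ from \eqref{eq:finduced} gives the same conclusion in $b_{\mathcal{P}}$. That step is not needed for the statement as posed.

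For the asymptotics, write $h(u)=u/(1-u)$. Then
\[
\frac{h(u)}{1/(1-u)}=u\to 1\quad\text{as } u\to1^{-},
\]
so $\Theta_{\mathcal{P}}\sim E_{\mathcal{P}}D_{0}/(1-f_{\mathcal{P}}/\fmax)$, and the right-hand side diverges because its denominator tends to $0^{+}$. The interpretive claim is read off from this form. The divergence is a simple pole in the residual margin $1-f_{\mathcal{P}}/\fmax$, which is exactly the factor multiplying $\gamma$ in \eqref{eq:rate} of Theorem~\ref{thm:contraction}. Its coefficient is $E_{\mathcal{P}}D_{0}$, so at equal margin, partitions with larger exploitability blow up faster.

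The main obstacle is not computational. It is making sure that the modelling identification \eqref{eq:drob} is legitimately invoked and that $\fmax$ is held fixed. If $\Drob$ were instead defined intrinsically as a max over adversary actions, the divergence would need Theorem~\ref{thm:contraction}'s $\rho(f)\to1^{-}$ argument to be carried through to $D_{k}$. The paper only supplies that link heuristically. I would therefore rest the proof on \eqref{eq:drob} as the adopted model, and remark that the qualitative conclusion survives for any $\Drob$ that diverges at least like $(1-u)^{-1}$.
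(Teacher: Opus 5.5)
Your proof is correct and follows essentially the same route as the paper: substitute \eqref{eq:drob} into \eqref{eq:sev} to get $\Theta_{\mathcal{P}}=E_{\mathcal{P}}D_{0}\,u/(1-u)$, differentiate to obtain $E_{\mathcal{P}}D_{0}/[\fmax(1-u)^{2}]\ge 0$, and read off the simple pole as $u\to1^{-}$. Your version is slightly more careful than the paper's: you note that $\fmax$, like $E_{\mathcal{P}}$, is independent of $f_{\mathcal{P}}$, you obtain strict monotonicity, and you justify the asymptotic equivalence via $h(u)(1-u)=u\to1$; your remark that the result rests on \eqref{eq:drob} as an adopted model, rather than on the max-over-actions definition of $\Drob$, accurately describes what the paper's proof actually uses.
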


\begin{proof}
Since $E_{\mathcal{P}}$ does not depend on $b_{\mathcal{P}}$, differentiating
\eqref{eq:tei} through \eqref{eq:sev} and \eqref{eq:drob} gives
$\partial\Theta_{\mathcal{P}}/\partial f_{\mathcal{P}}=E_{\mathcal{P}}D_{0}/
[\fmax(1-f_{\mathcal{P}}/\fmax)^{2}]\ge 0$, so $\Theta_{\mathcal{P}}$ is
monotonically non-decreasing in $f_{\mathcal{P}}$. Substituting
\eqref{eq:drob} into $\Theta_{\mathcal{P}}=E_{\mathcal{P}}\Drob f_{\mathcal{P}}/\fmax$
gives $\Theta_{\mathcal{P}}=E_{\mathcal{P}}D_{0}(f_{\mathcal{P}}/\fmax)/
(1-f_{\mathcal{P}}/\fmax)$, whose leading behaviour as $f_{\mathcal{P}}\to\fmax^{-}$ is
$E_{\mathcal{P}}D_{0}/(1-f_{\mathcal{P}}/\fmax)$, a simple pole at breakdown. This
divergence is the analytic counterpart of the contraction stall in
Theorem~\ref{thm:contraction}, where $\rho(f)\to 1$ and the achieved service, and hence
the inverse degeneracy, collapses.
\end{proof}

\subsection{Stackelberg Configuration and Design-Time Control}
The operator moves first by choosing the macro configuration, then the adversary
best-responds. We record that this leader-follower problem is well posed.

\begin{proposition}[Existence of the Stackelberg configuration]\label{prop:stackel}
Let the operator choose a macro configuration $\theta=(\lamtwo,\FSS,\{\zeta_{\mathcal{P}}\})$
from a compact set $\Theta$, and let the adversary respond with the unique
$b^{\star}(\theta)$ of Proposition~\ref{prop:exist}. Then $b^{\star}(\theta)$ is
continuous in $\theta$, and the operator problem
$\min_{\theta\in\Theta}\TEI(b^{\star}(\theta))$ attains a minimiser.
\end{proposition}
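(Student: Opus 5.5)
The plan is to reduce the operator problem to a continuous function on a compact set and apply the Weierstrass theorem. Continuity of $b^{\star}(\theta)$ will follow from Berge's maximum theorem, together with the uniqueness assumed in the statement. The composition $\theta\mapsto\TEI(b^{\star}(\theta))$ is then continuous, and it attains its minimum on the compact set $\Theta$.

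\textbf{Step 1: continuity of the data in $\theta$.} I first check that every ingredient of the adversary problem \eqref{eq:adv} depends continuously on $\theta$.
\begin{itemize}
\item By Assumption~\ref{ass:break}, $\fmax(\mathcal{P})$ is a continuous function of $(\lamtwo,\FSS)$. It is positive on $\Theta$ provided $\Theta$ keeps $\lamtwo$ bounded away from zero. I would impose this as part of the compactness hypothesis, since a connected coordination graph gives $\lamtwo>0$.
\item The caps $\bar{b}_{\mathcal{P}}(\theta)=f_{\mathcal{P}}^{-1}\big((1-\epsilon)\fmax(\mathcal{P})\big)$ are continuous in $\theta$. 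The reason is that $f_{\mathcal{P}}$ is a $C^{1}$ strictly increasing bijection onto $[0,f_{\infty})$, and $(1-\epsilon)\fmax<f_{\infty}$, as used in Proposition~\ref{prop:exist}.
\item Each payoff $U_{\mathcal{P}}(b;\theta)=\zeta_{\mathcal{P}}E_{\mathcal{P}}D_{0}\,h(f_{\mathcal{P}}(b)/\fmax)$ is jointly continuous in $(b,\theta)$ on the region $f_{\mathcal{P}}(b)\le(1-\epsilon)\fmax$. There $h$ is evaluated only on $[0,1-\epsilon]$, so it stays finite and continuous.
\end{itemize}

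\textbf{Step 2: continuity of the feasible set.} The feasible correspondence $\theta\mapsto K(\theta)=\{b\ge0,\ \mathbf{1}^{\top}b\le B,\ b_{\mathcal{P}}\le\bar{b}_{\mathcal{P}}(\theta)\}$ has nonempty compact values, since it always contains $0$. It is upper hemicontinuous because its graph is closed and it is uniformly bounded by $B$. It is lower hemicontinuous because for any $b\in K(\theta)$ and $\theta_{n}\to\theta$ the truncation $b^{n}_{\mathcal{P}}=\min(b_{\mathcal{P}},\bar{b}_{\mathcal{P}}(\theta_{n}))$ lies in $K(\theta_{n})$ and converges to $b$.

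\textbf{Step 3: Berge and Weierstrass.} Berge's maximum theorem now gives that the argmax correspondence is upper hemicontinuous. When it is single-valued, as in the statement's uniqueness hypothesis, it is continuous as a function $b^{\star}(\theta)$. Finally, $\TEI(b^{\star}(\theta))=\sum_{\mathcal{P}}U_{\mathcal{P}}(b^{\star}_{\mathcal{P}}(\theta);\theta)$ is a composition of continuous maps, and Weierstrass yields a minimiser on $\Theta$.

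\textbf{Main obstacle: uniqueness.} Proposition~\ref{prop:exist} gives only existence at an extreme point, and convex maximisation can have ties. An example is two partitions with equal $\rho_{\mathcal{P}}$ in Proposition~\ref{prop:conc}. At such ties $b^{\star}$ can jump, so pointwise continuity is genuinely at risk there.
\begin{itemize}
\item I would first take uniqueness as the standing hypothesis, as the statement does. I would then note that ties form a lower-dimensional set.
\item To remove the hypothesis, I would bypass continuity of $b^{\star}$. By Berge, the value function $V(\theta)=\max_{b\in K(\theta)}\sum U_{\mathcal{P}}$ is continuous regardless of uniqueness, and $\TEI(b^{\star}(\theta))=V(\theta)$ for any selection $b^{\star}$. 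Hence the operator objective equals $V$, and the minimiser exists even without uniqueness.
\end{itemize}

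This value-function route is the most robust argument, and I would present it as the core of the proof. Continuity of $b^{\star}$ would be stated only under the uniqueness hypothesis.
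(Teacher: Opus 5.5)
Your proof is correct. It uses the same skeleton as the paper, Berge's maximum theorem followed by Weierstrass, but it differs at the decisive step.

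\textbf{Where the paper's step is weak.} The paper gets continuity of $\theta\mapsto\TEI(b^{\star}(\theta))$ by arguing that the maximiser is generically single-valued. From this it concludes that $b^{\star}$ is a continuous selection off a measure-zero tie set, and then asserts that the composition is continuous on all of $\Theta$. That justification does not establish the conclusion by itself. A function that is continuous only off a null set need not be continuous on $\Theta$, and need not attain its minimum. The measure-zero claim is also only asserted.

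\textbf{What your route fixes.} The adversary maximises exactly $\sum_{\mathcal{P}}\zeta_{\mathcal{P}}\Theta_{\mathcal{P}}=\TEI$. Hence $\TEI(b^{\star}(\theta))=V(\theta)$ for every selection of maximisers, and the operator objective is the value function. Berge makes $V$ continuous with no uniqueness or genericity assumption. This is the argument that actually secures the minimiser, so leading with it, as you propose, is right.

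You also verify properties of $K(\theta)$ that the paper only asserts:
\begin{itemize}
\item truncation gives lower hemicontinuity;
\item a closed graph with uniform boundedness gives upper hemicontinuity;
\item the side conditions you impose, $\lamtwo$ bounded away from zero so that $\fmax>0$, and $(1-\epsilon)\fmax<f_{\infty}$ so that the caps are continuous, make the objective jointly continuous on the graph of $K$.
\end{itemize}

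\textbf{Continuity of $b^{\star}$.} You recover this only under the uniqueness premise that the statement builds in. That restriction is appropriate, since $b^{\star}$ can genuinely jump at ties.
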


\begin{proof}
The adversary objective is jointly continuous in $(\theta,b)$, and for each $\theta$ the
feasible set $K(\theta)$ of Proposition~\ref{prop:exist} is nonempty, compact, and varies
continuously with $\theta$. Berge's maximum theorem \cite{berge} then gives that the value function is
continuous and the best-response correspondence $b^{\star}(\theta)$ is upper
hemicontinuous, without requiring concavity of the objective. The maximiser is generically
single-valued, since ties in cap efficiency occur only on a measure-zero set of parameters,
so $b^{\star}(\theta)$ is a continuous selection off that set. The composition
$\TEI(b^{\star}(\theta))$ is therefore continuous on the compact set $\Theta$, so it
attains a minimum by the Weierstrass theorem.
\end{proof}

\begin{proposition}[Design-time controllability and exchange rate]\label{prop:control}
Any operator macro action that raises $\fmax(\mathcal{P})$, whether through the
algebraic connectivity $\lamtwo$, through the grouping of functions into partitions,
or through the salience profile, does not increase any
$\Theta_{\mathcal{P}}(b^{\star}_{\mathcal{P}})$ and hence does not increase
$\TEI(b^{\star})$, and strictly decreases them whenever the budget constraint binds.
Moreover, in the low-connectivity regime $\lamtwo\ll 1/c_{3}$, the marginal
substitution between connectivity $\lamtwo$ and substitutability $\FSS$ along an
iso-$\fmax$ contour is constant and equal to
\begin{equation}
\frac{d\lamtwo}{d\FSS}\bigg|_{\fmax}=-\,\frac{c_{2}f^{\star}}{\kappa^{\star}},
\label{eq:exchange}
\end{equation}
where $f^{\star}$ is the contour level and $\kappa^{\star}$ is the maximal value of the
micro factor $\kappa(\beta,\eta_{\tau})$, attained at the interior optimum
$(\beta^{\star},\eta^{\star}_{\tau})$ of Lemma~\ref{lem:kappa}. The negative sign
records that the two levers are substitutes: the
connectivity-substitutability exchange rate, that is, the amount of algebraic
connectivity the operator may give up per unit of additional substitutability at
constant exposure, is $c_{2}f^{\star}/\kappa^{\star}$. In the saturating regime the
local exchange rate is
$\frac{d\lamtwo}{d\FSS}|_{\fmax}=-c_{2}\fmax(1+c_{3}\lamtwo)^{2}/\kappa^{\star}$,
which reduces to \eqref{eq:exchange} as $c_{3}\to 0$.
\end{proposition}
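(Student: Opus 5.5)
The proof has two independent parts: the monotone-control claim and the exchange-rate computation. For the first, fix a partition and a budget $b_{\mathcal{P}}$. Writing $u=f_{\mathcal{P}}(b_{\mathcal{P}})/\fmax(\mathcal{P})$, the proof of Proposition~\ref{prop:exist} gives
\[
\Theta_{\mathcal{P}}=E_{\mathcal{P}}D_{0}\,h(u),\qquad h(u)=\frac{u}{1-u}.
\]
Here $h$ is strictly increasing on $[0,1)$ and $f_{\mathcal{P}}(b_{\mathcal{P}})$ does not depend on $\fmax$. So at any fixed allocation, raising $\fmax(\mathcal{P})$ lowers $u$ and hence lowers $\Theta_{\mathcal{P}}$ pointwise.

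Passing from fixed allocations to the equilibrium $b^{\star}$ requires an envelope argument, since raising $\fmax$ also enlarges the cap $\bar b_{\mathcal{P}}=f_{\mathcal{P}}^{-1}((1-\epsilon)\fmax)$. I would handle this in two cases, using the extreme-point structure from Propositions~\ref{prop:exist} and~\ref{prop:conc}.
\begin{itemize}
\item A partition at its cap sits at the common level $u=1-\epsilon$. Its value $E_{\mathcal{P}}D_{0}h(1-\epsilon)$ is therefore unchanged, so that entry of the score does not increase.
\item A partition below its cap, including the single partially filled one, receives either the same budget or less. The cap is larger and the residual budget is redistributed by the greedy rule, and the pointwise decrease then applies.
\end{itemize}
Summing with nonnegative saliences gives that $\TEI(b^{\star})$ does not increase. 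For strictness when the budget binds, the partially filled or interior partition has $u<1-\epsilon$ with $b_{\mathcal{P}}>0$, and its score strictly decreases as $\fmax$ rises, because $h'>0$ and $u$ strictly falls.

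The main obstacle is that the greedy priority order $\rho_{\mathcal{P}}$ can reshuffle as $\fmax$ changes. I would first handle it for configurations in the generic set where the order is preserved. I would then argue by continuity of $b^{\star}(\theta)$ from Proposition~\ref{prop:stackel} together with a comparison of values. The adversary's optimal value cannot exceed its value in the old configuration, because each $U_{\mathcal{P}}(b)$ is pointwise non-increasing in $\fmax$ at every fixed $b$. The old feasible allocations remain feasible, since caps only grow, so the new optimum is evaluated on a larger set of smaller functions. This step needs care: the larger feasible set pushes the value the other way. I expect to resolve it by observing that the cap itself is defined at the fixed normalised level $1-\epsilon$, so extreme-point values at caps are invariant.

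For the exchange rate, I would implicitly differentiate Assumption~\ref{ass:break} at $\kappa=\kappa^{\star}$ from Lemma~\ref{lem:kappa}. Write
\[
F(\lamtwo,\FSS)=\kappa^{\star}\,\frac{g(\lamtwo)}{1+c_{2}(1-\FSS)}.
\]
The derivatives are
\[
\partial_{\FSS}F=\frac{c_{2}F}{1+c_{2}(1-\FSS)},\qquad
\partial_{\lamtwo}F=\frac{\kappa^{\star}g'(\lamtwo)}{1+c_{2}(1-\FSS)},\qquad
g'(\lamtwo)=(1+c_{3}\lamtwo)^{-2}.
\]
The implicit function theorem then gives
\[
\frac{d\lamtwo}{d\FSS}=-\frac{\partial_{\FSS}F}{\partial_{\lamtwo}F}=-\frac{c_{2}\,\fmax\,(1+c_{3}\lamtwo)^{2}}{\kappa^{\star}}.
\]
This is the stated saturating-regime formula. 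Setting $c_{3}\to0$, or working in the regime $\lamtwo\ll1/c_{3}$, and evaluating on the contour $\fmax=f^{\star}$ yields the constant \eqref{eq:exchange}. The negative sign confirms that the two levers are substitutes.
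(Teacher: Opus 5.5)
Your exchange-rate half is correct and matches the paper's computation in implicit-function form. The paper differentiates the rearranged contour $\kappa^{\star}g(\lamtwo)=f^{\star}\big(1+c_{2}(1-\FSS)\big)$, while you compute $-\partial_{\FSS}F/\partial_{\lamtwo}F$. Both give $-c_{2}\fmax(1+c_{3}\lamtwo)^{2}/\kappa^{\star}$.

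The gap is in the monotonicity half, at the step you flag and leave open. Your case split assumes every partition keeps its status after the change: a capped partition stays capped, and a below-cap partition ``receives the same budget or less''. But raising $\fmax(\mathcal{P})$ enlarges $\bar b_{\mathcal{P}}$. Once $\mathcal{P}$ becomes less attractive than a competitor, budget moves \emph{to} that competitor. Continuity of $b^{\star}(\theta)$ cannot bridge this, because $b^{\star}$ jumps exactly at the ties where the order changes. In fact no argument can, because the per-partition claim in the statement is false.

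Here is a counterexample. Take two partitions with equal $\zeta_{\mathcal{P}}E_{\mathcal{P}}D_{0}$, hence equal cap value $V=\zeta E D_{0}h(1-\epsilon)$, and caps $\bar b_{1}<B=\bar b_{2}$, so the budget binds. The best vertex of $K$ is $(\bar b_{1},B-\bar b_{1})$, worth $V+U_{2}(B-\bar b_{1})>V$. Now raise $\fmax(\mathcal{P}_{1})$ just until $\bar b_{1}>B$. The feasible set becomes the simplex with nonzero vertices $(B,0)$ and $(0,B)$. The first is worth $U_{1}(B)<V$, because $f_{1}(B)/\fmax(\mathcal{P}_{1})<1-\epsilon$; the second is worth $V$. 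The whole budget moves to partition two, so $\Theta_{2}(b^{\star}_{2})$ rises to its cap value although only $\fmax(\mathcal{P}_{1})$ was raised, while $\TEI$ falls to $V$. The paper's own proof does not address this either: its one-line ``lowers the achievable damage per unit budget'' supports only the aggregate claim.

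The aggregate claim is true. Your ``cap values are invariant'' idea proves it once you apply it to every feasible point rather than only to vertices. Hold $E_{\mathcal{P}}$, $D_{0}$ and $\zeta_{\mathcal{P}}$ fixed. Note that $\TEI(b^{\star})$ is the adversary's optimal value, and that $U_{\mathcal{P}}$ depends on $b_{\mathcal{P}}$ only through $u=f_{\mathcal{P}}(b_{\mathcal{P}})/\fmax(\mathcal{P})$. Write $\fmax'(\mathcal{P})\ge\fmax(\mathcal{P})$ for the raised threshold. Map any $b'$ that is feasible after the change to $\tilde b_{\mathcal{P}}=f_{\mathcal{P}}^{-1}\big(f_{\mathcal{P}}(b'_{\mathcal{P}})\,\fmax(\mathcal{P})/\fmax'(\mathcal{P})\big)$. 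Then $\tilde b$ has the same $u$ under the old threshold, hence the same payoff. It uses no more budget, and it respects the old caps because $u\le 1-\epsilon$. So the new optimal value is at most the old one, with no appeal to the priority order.

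The same map shows that your strictness step needs a hypothesis the statement omits. You implicitly take the partially filled partition to be the one whose threshold rose. If instead the raised partition is unattacked, the old $b^{\star}$ stays feasible with unchanged value and hence stays optimal, so $\TEI$ is unchanged even under a binding budget. Strictness does hold when a raised partition carries positive spend at the new optimum. Then $\tilde b$ leaves slack that the old problem can spend on an uncapped partition, which strictly increases the old value. A unique old optimum that attacks a raised partition also suffices.
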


\begin{proof}
By \eqref{eq:sev} and \eqref{eq:drob}, severity $S_{\mathcal{P}}$ is decreasing in
$\fmax(\mathcal{P})$ at fixed $f_{\mathcal{P}}$, since both the ratio
$f_{\mathcal{P}}/\fmax$ and the divergence factor $1/(1-f_{\mathcal{P}}/\fmax)$
decrease. By Proposition~\ref{prop:exist} the best response $b^{\star}$ is a continuous
selection, and raising $\fmax$ lowers the achievable damage per unit budget, so no
$\Theta_{\mathcal{P}}(b^{\star}_{\mathcal{P}})$ increases and $\TEI(b^{\star})$ does not
increase. When the budget binds, some partition sits strictly below breakdown with
positive marginal severity, and lowering that severity strictly lowers the aggregate, so
the decrease is strict. For the exchange rate, hold $\fmax=f^{\star}$ fixed in \eqref{eq:fmax}. With the
linear macro $g(\lamtwo)=\lamtwo$ the contour is
$\kappa^{\star}\lamtwo/\big(1+c_{2}(1-\FSS)\big)=f^{\star}$, that is,
\[
\kappa^{\star}\lamtwo=f^{\star}\big(1+c_{2}\big)-c_{2}f^{\star}\FSS .
\]
Both sides are differentiable in $\FSS$, and $f^{\star}$ and $\kappa^{\star}$ are held
constant along the contour, so differentiating gives
$\kappa^{\star}\,d\lamtwo/d\FSS=-c_{2}f^{\star}$ and hence
$d\lamtwo/d\FSS=-c_{2}f^{\star}/\kappa^{\star}$, which is independent of the operating
point and therefore constant along the contour. For the saturating macro
$g(\lamtwo)=\lamtwo/(1+c_{3}\lamtwo)$ the contour reads
$\kappa^{\star}g(\lamtwo)=f^{\star}\big(1+c_{2}(1-\FSS)\big)$, and the same
differentiation gives $\kappa^{\star}g'(\lamtwo)\,d\lamtwo/d\FSS=-c_{2}f^{\star}$.
Since $g'(\lamtwo)=(1+c_{3}\lamtwo)^{-2}$ and $f^{\star}=\fmax$ on the contour, this
is $d\lamtwo/d\FSS=-c_{2}\fmax(1+c_{3}\lamtwo)^{2}/\kappa^{\star}$, and taking
$c_{3}\to 0$ recovers \eqref{eq:exchange}. The two levers therefore trade off with a
negative slope, and the magnitude of that slope is the exchange rate quoted in the
statement.
\end{proof}

\begin{remark}[Design-time versus runtime]
Proposition~\ref{prop:control} uses the slow macro levers, which gives a design-time
threat model set at deployment. The micro knobs $(\beta,\eta_{\tau})$ act on a faster
timescale through Lemma~\ref{lem:kappa} and give a runtime threat model as the surface
$\Sigma$ drifts. This separation formalises, with a feasibility grounding, the
design-time and runtime split that checklist methods such as STRIDE apply informally.
\end{remark}
{The proposed framework naturally accommodates slowly varying network topologies.
Whenever the coordination graph changes due to node mobility, link failures,
virtual-network-function migration, or infrastructure reconfiguration, the
graph Laplacian and the associated structural metrics are recomputed from the
updated topology. The breakdown threshold, threat scores, and equilibrium attack
allocation are then re-evaluated using the same analytical framework without
modifying the underlying consensus protocol. Consequently, the proposed
formulation is not restricted to static deployments but extends directly to
quasi-static dynamic networks whose topology evolves on a slower timescale than
the consensus convergence process.}
\subsection{Sensitivity and Complexity}
\begin{proposition}[Lipschitz sensitivity]\label{prop:sens}
Suppose the structural metrics are estimated with bounded relative error, so that the
exploitabilities and thresholds used by the solver deviate from their true values by at
most $\delta$ in relative terms. Then the resulting equilibrium allocation and the
Threat Exposure Index deviate from their true values by $O(\delta)$, so the ranking is
stable under small metric misspecification.
\end{proposition}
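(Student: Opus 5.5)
The proof would track how the relative errors in the inputs $(E_{\mathcal{P}},\fmax(\mathcal{P}))$ pass through the priority-fill structure of Propositions~\ref{prop:exist} and~\ref{prop:conc}, one link at a time. Write $\hat E_{\mathcal{P}}=E_{\mathcal{P}}(1+e_{\mathcal{P}})$ and $\hat f_{\max}(\mathcal{P})=\fmax(\mathcal{P})(1+d_{\mathcal{P}})$ with $|e_{\mathcal{P}}|,|d_{\mathcal{P}}|\le\delta$.

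\textbf{Caps.} Since $\bar b_{\mathcal{P}}=f_{\mathcal{P}}^{-1}((1-\epsilon)\fmax)=-a_{\mathcal{P}}^{-1}\ln\big(1-(1-\epsilon)\fmax/f_\infty\big)$, the map $\fmax\mapsto\bar b_{\mathcal{P}}$ is smooth. On the region $\fmax\le f_\infty-\text{margin}$, which the standing assumption $f_\infty>\fmax$ used in Proposition~\ref{prop:exist} makes available, its derivative is bounded. Hence $|\hat{\bar b}_{\mathcal{P}}-\bar b_{\mathcal{P}}|\le L_b\,\delta$.

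\textbf{Cap efficiencies.} At the cap the payoff is $U_{\mathcal{P}}(\bar b_{\mathcal{P}})=\zeta_{\mathcal{P}}E_{\mathcal{P}}D_0h(1-\epsilon)$. The factor $h(1-\epsilon)$ is common to all partitions and independent of $\fmax$. The cap efficiencies $\rho_{\mathcal{P}}$ are therefore products and quotients of $O(\delta)$-perturbed positive quantities, so $|\hat\rho_{\mathcal{P}}-\rho_{\mathcal{P}}|\le L_\rho\,\delta\,\rho_{\mathcal{P}}$.

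\textbf{Ranking.} Proposition~\ref{prop:conc} characterises the equilibrium as greedy filling in order of decreasing $\rho_{\mathcal{P}}$. Let $g_{\min}$ be the minimum relative gap between distinct consecutive $\rho$'s. If $\delta<g_{\min}/(2L_\rho)$, the estimated ordering coincides with the true one. This is the step that carries the stability-of-ranking claim.

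\textbf{Allocation.} With the order fixed, $b^\star$ is an explicit piecewise-linear function of the caps and $B$: the first $m$ partitions take their caps, partition $m+1$ takes the residual $B-\sum_{i\le m}\bar b_i$, and the rest take zero. The index $m$ is itself determined by comparing partial sums of caps with $B$. Away from the breakpoints $B=\sum_{i\le m}\bar b_i$ the index is locally constant, so $\|\hat b^\star-b^\star\|_1\le 2ML_b\delta$. At a breakpoint the allocation is still continuous in the caps, because the residual tends to zero or to the cap, so the same linear bound survives.

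\textbf{TEI.} Each $\Theta_{\mathcal{P}}(b^\star_{\mathcal{P}})=E_{\mathcal{P}}D_0\,u/(1-u)$ with $u\le1-\epsilon$. On this range $h$ is Lipschitz with constant $\epsilon^{-2}$, and $u$ is Lipschitz in $(b,\fmax)$. The triangle inequality, together with the product rule for the $E_{\mathcal{P}}$ factor, then gives $|\widehat{\TEI}-\TEI|\le C\,\delta$ with $C$ depending on $M$, $\epsilon$, $\max\zeta_{\mathcal{P}}E_{\mathcal{P}}D_0$ and $L_b$.

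\textbf{Main obstacle: ties in $\rho$.} When two efficiencies are within $O(\delta)$ of each other, an arbitrarily small error can swap their order, and the allocation jumps between two vertices. This is exactly the non-uniqueness set that Proposition~\ref{prop:stackel} excluded as measure zero. I would handle it in two parts.
\begin{itemize}
\item[(i)] State the $O(\delta)$ allocation bound under a gap condition $\delta\le c\,g_{\min}$.
\item[(ii)] Show that the \emph{TEI} bound holds unconditionally. Any two near-tied orderings produce objective values that differ by $O(\delta)$, since swapping budget between partitions with nearly equal $\rho$ changes the fractional-knapsack value by at most $|\rho_i-\rho_j|\cdot B=O(\delta)$. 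The optimal value of the fractional knapsack is also 1-Lipschitz in its value-to-weight data.
\end{itemize}
So the index bound needs no gap assumption, while the allocation and ranking bounds are stated with the gap condition made explicit.
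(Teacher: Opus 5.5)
Your proposal is correct relative to the paper and follows the same route as its proof. Caps and cap efficiencies depend smoothly on the metrics, the priority order of Proposition~\ref{prop:conc} is locally constant away from ties so each filled cap and the single partial spend move by $O(\delta)$, the index follows by continuity, and crossings are dismissed because the two partitions have nearly equal efficiency. You are more explicit than the paper about the gap condition, the budget breakpoints, and the constants.

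One caution on step (ii): the swap bound $|\rho_i-\rho_j|\,B$ controls the linear fractional-knapsack value, but the index charges the partially filled partition its convex payoff $U_{\mathcal{P}}(r)<\rho_{\mathcal{P}}\,r$. The unconditional index bound is therefore better justified by noting that $\TEI(b^{\star})$ is by definition the optimal value of \eqref{eq:adv}, which is Lipschitz in the exploitabilities and caps (with constant of order $\epsilon^{-2}$ on $u\le 1-\epsilon$) whichever vertex attains it.
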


\begin{proof}
By Proposition~\ref{prop:conc} the equilibrium is the priority-fill allocation determined
by the caps $\bar{b}_{\mathcal{P}}$ and the cap efficiencies $\rho_{\mathcal{P}}$. Both are
smooth functions of the metrics: the cap $\bar{b}_{\mathcal{P}}=f_{\mathcal{P}}^{-1}
((1-\epsilon)\fmax)$ is smooth in $\fmax$ through the invertible map \eqref{eq:finduced},
and $\rho_{\mathcal{P}}$ is smooth in $E_{\mathcal{P}}$, $\Drob$, and $\fmax$. A relative
error of size $\delta$ in the metrics therefore perturbs every cap and efficiency by
$O(\delta)$. Away from ties in $\rho_{\mathcal{P}}$, the priority order is locally constant,
so the allocation moves each filled cap by $O(\delta)$ and moves the single partial spend by
$O(\delta)$ to preserve the budget, giving an $O(\delta)$ change in $b^{\star}$. By
continuity of \eqref{eq:tei} the Threat Exposure Index also changes by $O(\delta)$. The only
non-smooth events are exchanges of adjacent priorities, which occur when two efficiencies
cross; these are confined to a measure-zero set of parameters, and at such a crossing the
two partitions carry nearly equal efficiency so the reordering has vanishing effect on the
index. Rankings, being determined by the ordering of the scores, are preserved
except within an $O(\delta)$ band around ties.
\end{proof}

\begin{proposition}[Complexity]\label{prop:complexity}
The design-time pass solves the budget problem \eqref{eq:adv} over $M$ partitions in
$O(M\log M)$ time once the structural metrics are known, and the metrics themselves are
read from the DC-ABM equilibrium at a cost linear in the number of agents and neighbours.
\end{proposition}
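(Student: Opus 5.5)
The argument splits the cost into two parts. The first is the adversary solve once the metrics are known; the second is extracting the metrics from the DC-ABM state. Each part is bounded separately.

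For the first part, I rely on Proposition~\ref{prop:conc}: the equilibrium $b^{\star}$ is the fractional-knapsack priority fill over the caps $\bar{b}_{\mathcal{P}}$ with efficiencies $\rho_{\mathcal{P}}$. The solver runs in four steps.
\begin{enumerate}
\item For each partition, evaluate $\fmax(\mathcal{P})$ from \eqref{eq:fmax}, using the fixed $\kappa^{\star}$ of Lemma~\ref{lem:kappa} and $g(\lamtwo)$ computed once. Evaluate $E_{\mathcal{P}}$ from \eqref{eq:expl}.
\item Compute the cap in closed form by inverting \eqref{eq:finduced}: $\bar{b}_{\mathcal{P}}=-a_{\mathcal{P}}^{-1}\ln\!\big(1-(1-\epsilon)\fmax(\mathcal{P})/f_{\infty}\big)$. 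Compute $\rho_{\mathcal{P}}=\zeta_{\mathcal{P}}E_{\mathcal{P}}D_{0}h(1-\epsilon)/\bar{b}_{\mathcal{P}}$. This is $O(1)$ per partition, so $O(M)$ in total.
\item Sort the partitions by decreasing $\rho_{\mathcal{P}}$, at cost $O(M\log M)$.
\item Sweep the sorted list, assigning full caps until the residual budget is smaller than the next cap. Put the residual on that partition and zero on the rest. Evaluate $\TEI(b^{\star})$ by \eqref{eq:tei}. Both are $O(M)$.
\end{enumerate}
The sort dominates, giving $O(M\log M)$. Correctness is inherited from Proposition~\ref{prop:conc} and Proposition~\ref{prop:exist}, so only the operation count needs to be argued. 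The tie-breaking rule in the sort does not affect the cost.

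For the second part, I count the work needed to read $\DWPRm$, $\FSSp^{\mathrm{trust}}$, $\FSS$ and $\Drob$ from a converged DC-ABM state. Each of the following is one pass over the neighbour lists, hence $O(K+|\mathcal{E}_{f}|)$:
\begin{itemize}
\item \emph{Path robustness.} $\DWPRm$ is a Gini--Simpson sum over each entity's mode masses, and mode memberships are assigned by the partition.
\item \emph{Substitutability.} $\FSSp^{\mathrm{trust}}$ is a reputation-weighted count over each entity's neighbours, read from the stored $\tau_{kj}$. Its average $\FSS$ is one further pass.
\item \emph{Robust degeneracy.} $\Drob$ follows from the observed $D_{k}$ via \eqref{eq:drob}.
\end{itemize}
Each entity is charged for itself and its incident edges, so the total cost is linear in the number of agents and neighbours.

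The main obstacle is $\lamtwo$ in \eqref{eq:fmax}. Computing it exactly by eigendecomposition is not linear. I would resolve this in one of two ways:
\begin{itemize}
\item Treat $\lamtwo$ as a design-time constant fixed by the chosen topology and computed once, outside the per-pass cost.
\item Estimate it from the measured per-step contraction factor, using Theorem~\ref{thm:contraction}, or by a fixed number of power or Lanczos iterations on $\Lf$. Each iteration costs $O(|\mathcal{E}_{f}|)$, which keeps the total linear for a fixed iteration count.
\end{itemize}
I would state this choice explicitly so that the linear-cost claim is read per pass with the iteration count and the convergence horizon held constant.
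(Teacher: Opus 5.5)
Your proposal is correct and follows the paper's proof essentially step for step. You use closed-form caps through the invertible map \eqref{eq:finduced} with constant-time cap efficiencies, an $O(M\log M)$ sort that dominates the $O(M)$ priority-fill sweep with correctness taken from Proposition~\ref{prop:conc}, and an $O(K\bar{d})$ single-pass read of local quantities from the converged consensus. Your caveat about $\lamtwo$ is a useful addition: the paper leaves implicit that it must be a fixed design-time input, or be approximated by a bounded number of linear-cost power or Lanczos iterations, when it calls the evaluation of \eqref{eq:fmax} constant-time.
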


\begin{proof}
By Proposition~\ref{prop:conc} the optimal attack fills partitions to their caps in
decreasing order of cap efficiency $\rho_{\mathcal{P}}$. Computing each cap
$\bar{b}_{\mathcal{P}}$ is a constant-time evaluation of the closed forms
\eqref{eq:fmax} and \eqref{eq:finduced}, since the induced-corruption map is
analytically invertible, and computing each $\rho_{\mathcal{P}}$ is then a
constant-time evaluation of \eqref{eq:expl}, \eqref{eq:drob} and \eqref{eq:tei}. Sorting the $M$ efficiencies
costs $O(M\log M)$, and a single linear sweep then fills the caps in priority order until
the budget is exhausted, at $O(M)$ cost, so the solver runs in $O(M\log M)$. The metric
extraction reads local quantities from the converged consensus, which costs
$O(K\bar{d})$ for $K$ agents of average degree $\bar{d}$, linear in the problem size.
\end{proof}

\subsection{Reference Implementation}\label{sec:impl}
Every result reported in Sec.~\ref{sec:results} is produced by one self-contained
reference implementation with three components, which we describe here so that the
numerical section can be read as a validation of the analysis rather than as a
separate artefact. The first component is an \emph{agent-level simulator}. It instantiates the $K$
defender agents on the coordination graph $\mathcal{G}_{f}$, initialises the states
$x_{k}(0)$ and the reputations $\tau_{kj}(0)$, and integrates the trust-weighted
trimmed-mean update \eqref{eq:consensus} together with the reputation update
\eqref{eq:trust} synchronously. A fraction $f$ of the agents, drawn uniformly at
random, is Byzantine and may emit an arbitrary bounded value to each neighbour at
each step, independently across neighbours and across time. The simulator is the
only place where the consensus dynamics are evaluated; nothing downstream
approximates them.

The second component is a \emph{metric extractor}. It reads the structural
coordinates $\Sigma(\mathcal{P})$ of Definition~\ref{def:surface} from the converged
agent state, one partition at a time, and evaluates the breakdown threshold
$\fmax(\mathcal{P})$ through \eqref{eq:fmax} and the exploitability
$E_{\mathcal{P}}$ through \eqref{eq:expl}. The third component is the \emph{priority-fill equilibrium solver} of
Proposition~\ref{prop:conc}. It forms the caps $\bar{b}_{\mathcal{P}}$ and the cap
efficiencies $\rho_{\mathcal{P}}$, sorts the $M$ partitions by $\rho_{\mathcal{P}}$,
and fills the caps in that order until the budget is exhausted, which is the
$O(M\log M)$ pass of Proposition~\ref{prop:complexity}. Each baseline is implemented against the same graph instances, the same partition
definitions, and the same random seeds, so that every comparison in
Sec.~\ref{sec:results} differs only in the scoring or defence rule and not in the
underlying network realisation. Algorithm~\ref{alg:coevo} summarises the design-time and runtime procedure that results
from the analysis above.

\begin{algorithm}[!t]
\caption{Adversary-as-agents threat modelling}
\label{alg:coevo}
\begin{algorithmic}[1]
\REQUIRE partitions $\Pi$, macro config $(\mathcal{G}_{f},\{\zeta_{\mathcal{P}}\})$,
micro knobs $(\beta,\eta_{\tau})$, budget $B$, alarm level $\TEI^{\mathrm{th}}$
\STATE \textbf{Design time (macro):} run the DC-ABM defence \eqref{eq:consensus} to
equilibrium; for each $\mathcal{P}$ read $\Sigma(\mathcal{P})$ and compute
$\fmax(\mathcal{P})$ via \eqref{eq:fmax}
\STATE compute exploitability $E_{\mathcal{P}}$ \eqref{eq:expl} for all $\mathcal{P}$
\STATE solve the adversary best response \eqref{eq:adv} by filling partitions to their
breakdown caps in decreasing cap-efficiency order to obtain $b^{\star}$
\COMMENT{Prop.~\ref{prop:conc}, $O(M\log M)$}
\STATE evaluate $\Theta_{\mathcal{P}}(b^{\star}_{\mathcal{P}})$ and rank partitions;
report $\TEI(b^{\star})$
\STATE \textbf{Runtime (micro):}
\WHILE{network operating}
  \STATE refresh $\Sigma$ from local agent state; update
  $\{\Theta_{\mathcal{P}}\}$ and $\TEI$
  \IF{$\TEI>\TEI^{\mathrm{th}}$}
    \STATE retune $(\beta,\eta_{\tau})$ one step toward
    $(\beta^{\star},\eta_{\tau}^{\star})$, which raises $\kappa$ and hence
    $\fmax$ \COMMENT{Lemma~\ref{lem:kappa}, \eqref{eq:fmax}}
    \STATE recompute $b^{\star}$ and the ranking
    \IF{$\TEI>\TEI^{\mathrm{th}}$ and the micro levers are exhausted}
      \STATE schedule a macro reconfiguration that raises $\lamtwo$ or $\FSS$
      on the slower deployment timescale \COMMENT{Prop.~\ref{prop:control}}
    \ENDIF
  \ENDIF
\ENDWHILE
\ENSURE ranked threat model $\{\Theta_{\mathcal{P}}\}$, index $\TEI$, recommended levers
\end{algorithmic}
\end{algorithm}

\section{Numerical Results}\label{sec:results}
We instantiate a mobile-network deployment with $M=5$ partitions, namely Byzantine
nodes, jamming, pilot or identity contamination, cache-aware eavesdropping, and
correlated virtualised network function failure. The DC-ABM defence runs on a
coordination graph of $K=60$ entities. Unless stated otherwise the operating point uses
a $16$-regular topology with algebraic connectivity $\lamtwo\approx 2.15$ and
network-wide substitutability $\FSS=0.80$, which by \eqref{eq:fmax} gives a predicted
breakdown threshold $\fmax\approx 0.175$ on the most substitutable partition. The
structural coordinates of the five partitions are listed in Table~\ref{tab:params}. The
consensus, breakdown, and defence results are produced by the agent-level simulator
of Sec.~\ref{sec:impl}, which integrates \eqref{eq:consensus} and \eqref{eq:trust}
directly, and the equilibrium results are produced by the priority-fill solver of
the same implementation. Reported figures average over independent seeds as noted
per experiment. We compare against several state-of-the-art baselines, namely static
attack trees \cite{schneier}, CVSS scoring \cite{cvss}, MulVAL attack graphs
\cite{mulval}, a game-theoretic Stackelberg security model
\cite{gametheory,tambe}, and a multi-agent reinforcement learning red team
\cite{ivoghlian}. For the defended-utility comparison we further include the
alternating direction method of multipliers \cite{reifert}, a multi-agent reinforcement
learning defence with a graph neural network \cite{ivoghlian}, the Su-Vaidya Byzantine
consensus \cite{suvaidya}, a trimmed-mean-only defence, and an undefended baseline.
Table~\ref{tab:sota} first places the threat-modelling baselines beside the proposed
framework along the five capabilities that Secs.~\ref{sec:defence}
and~\ref{sec:coevo} establish, namely an emergent rather than exogenous adversary,
feasibility grounding, substitutability awareness, design-time control, and coupling to
a breakdown threshold. Figs.~\ref{fig:r5},~\ref{fig:r7}, and~\ref{fig:r9} then quantify
the rows of that summary.

\begin{table}[!t]
\centering
\caption{Structural coordinates of the five partitions at the operating point.}
\label{tab:params}
\setlength{\tabcolsep}{4pt}
\renewcommand{\arraystretch}{1.15}
\footnotesize
\begin{tabular}{@{}lccccc@{}}
\toprule
Partition & $\FSSp$ & $\DWPRm$ & $\FSSp\cdot\DWPRm$ & $\zeta$ & $\fmax$ \\
\midrule
Byzantine nodes      & 0.55 & 0.45 & 0.248 & 1.00 & 0.131 \\
Pilot contamination  & 0.65 & 0.55 & 0.358 & 0.80 & 0.146 \\
Correlated VNF       & 0.60 & 0.85 & 0.510 & 0.95 & 0.138 \\
Jamming              & 0.80 & 0.70 & 0.560 & 0.90 & 0.175 \\
Cache eavesdropping  & 0.90 & 0.80 & 0.720 & 0.70 & 0.201 \\
\bottomrule
\end{tabular}
\end{table}

\begin{table}[!t]
\centering
\caption{Qualitative comparison of threat-modelling approaches.
\cmark: supported; \xmark: not supported.}
\label{tab:sota}
\setlength{\tabcolsep}{2.4pt}
\renewcommand{\arraystretch}{1.15}
\scriptsize
\begin{tabular}{@{}lccccc@{}}
\toprule
Approach & \begin{tabular}{@{}c@{}}Emergent\\adversary\end{tabular}
         & \begin{tabular}{@{}c@{}}Feasibility\\grounded\end{tabular}
         & \begin{tabular}{@{}c@{}}Substit.\\aware\end{tabular}
         & \begin{tabular}{@{}c@{}}Design-time\\control\end{tabular}
         & \begin{tabular}{@{}c@{}}Breakdown\\coupled\end{tabular} \\
\midrule
Attack trees \cite{schneier}      & \xmark & \xmark & \xmark & \xmark & \xmark \\
STRIDE/CVSS \cite{shostack,cvss}  & \xmark & \xmark & \xmark & \xmark & \xmark \\
MulVAL graph \cite{mulval}        & \xmark & \xmark & \xmark & \cmark & \xmark \\
Game-theoretic \cite{gametheory}  & \cmark & \xmark & \xmark & \cmark & \xmark \\
Bayesian graph \cite{bayesag}     & \xmark & \cmark & \xmark & \cmark & \xmark \\
MARL red team \cite{ivoghlian}    & \cmark & \xmark & \xmark & \xmark & \xmark \\
\textbf{This work}                & \cmark & \cmark & \cmark & \cmark & \cmark \\
\bottomrule
\end{tabular}
\end{table}

Fig.~\ref{fig:r1} shows the honest disagreement $\Delta(t)$ under the DC-ABM update for
increasing corrupted-mass fractions, averaged over ten seeds. Below the breakdown
threshold the disagreement contracts geometrically, in agreement with
Theorem~\ref{thm:contraction}. The measured per-step factor at $f=0.10$, obtained by
regressing $\log\Delta(t)$ on $t$ over the geometric portion of the trajectory, is
$\hat{\rho}\approx0.95$, that is, the honest spread shrinks by about five percent per
consensus iteration; matching this against \eqref{eq:rate} with
$\fmax\approx0.175$ identifies $\gamma\approx0.12$ for this topology and trust
floor. As $f$ rises toward the predicted threshold the contraction slows and stalls,
and at $f=0.22$, above the threshold for this topology, the honest states no longer
agree. This is the plateau-then-cliff behaviour that the severity result exploits. Fig.~\ref{fig:r2} compares the breakdown threshold predicted by the factorisation
\eqref{eq:fmax} with the threshold measured directly from the dynamics, across seven
topologies of increasing algebraic connectivity and four substitutability levels, that
is, over $28$ configurations. The comparison is informative only if the two sides are
obtained independently, so we state the measurement in full. For a
fixed topology and substitutability
level we sweep the corrupted-mass fraction $f$ by bisection on the interval
$[0,0.5]$. At each trial value we run the agent-level simulator of
Sec.~\ref{sec:impl} from random initial states for the same horizon and over the
same seeds as Fig.~\ref{fig:r1}, and we declare the run converged when the honest
spread $\Delta(t)$ falls below a fixed tolerance within that horizon. The measured
threshold $\hat{f}_{\max}$ is the largest $f$ for which the majority of seeds
converge, and the bisection is stopped once the bracketing interval is narrower than
the sweep resolution. The measured threshold is thus read off the simulated
trajectories alone: the bisection never evaluates \eqref{eq:fmax}, and none of the
terms on its right-hand side, namely $\kappa(\beta,\eta_{\tau})$, $g(\lamtwo)$, and
$c_{2}(1-\FSS)$, enters the measurement. The two sides share only the scalar constants
$c_{2}$ and $c_{3}$, fitted once and held fixed across all $28$ configurations. Two
fitted degrees of freedom against $28$ measured thresholds is what makes
Fig.~\ref{fig:r2} a predictive test of the factorisation rather than a restatement of
it, since a wrong functional form could not be rescued by two constants. The model tracks the
measurement with a root-mean-square error of $0.014$ and a median relative error of
about $8.9$ percent. The saturating macro term of Assumption~\ref{ass:break} fits the
measured thresholds substantially better than a linear term, with residuals of $0.014$
against $0.059$, which confirms the diminishing marginal resilience of added
connectivity and justifies the saturating form.

\begin{figure}[!t]
\centering
\includegraphics[width=\columnwidth]{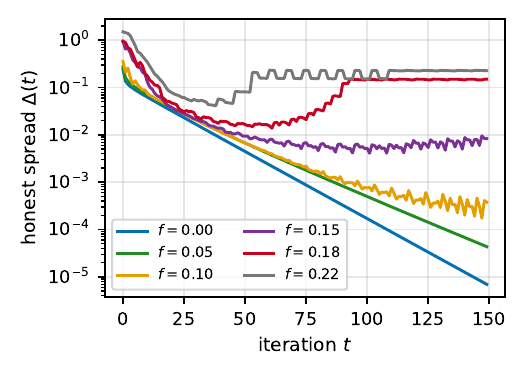}
\vspace{-10mm}
\caption{Consensus contraction of the honest disagreement $\Delta(t)$ under the DC-ABM
defence for increasing corrupted-mass fraction $f$. Below breakdown the disagreement
decays geometrically as predicted by Theorem~\ref{thm:contraction}, and it stalls as
$f$ approaches the threshold.}
\label{fig:r1}
\end{figure}


\begin{figure}[!t]
\centering
\includegraphics[width=\columnwidth]{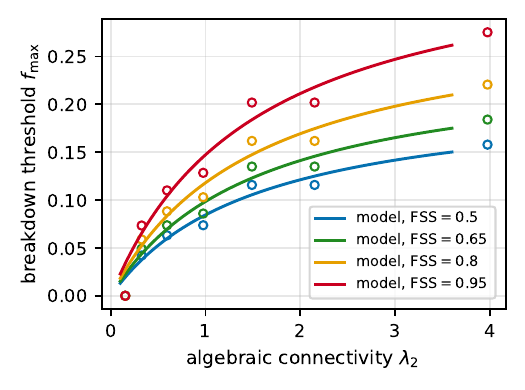}
\vspace{-10mm}
\caption{Breakdown threshold $\fmax$ against algebraic connectivity $\lamtwo$ for four
substitutability levels. Solid curves are the model \eqref{eq:fmax}; open markers are the
empirically measured thresholds. The model tracks the measurement with root-mean-square
error $0.014$.}
\label{fig:r2}
\end{figure}

Fig.~\ref{fig:r3} shows the empirically recovered micro factor
$\hat{\kappa}(\beta,\eta_{\tau})$ as a surface over the trim fraction and the trust rate.
The surface is unimodal, as Lemma~\ref{lem:kappa} predicts, with an interior maximum at
$(\beta^{\star},\eta_{\tau}^{\star})\approx(0.21,0.33)$, which matches the design values
$(0.22,0.30)$ assumed at the operating point. The threshold falls to zero at both ends of
each axis, confirming the two failure modes of trimming too little or too much and of
demoting neighbours too slowly or too aggressively. Fig.~\ref{fig:r4} shows the optimal per-partition spend $b^{\star}_{\mathcal{P}}$ as the
attack budget grows. The partitions activate in a strict order that matches
Proposition~\ref{prop:conc} exactly. Byzantine nodes enter first, followed by pilot
contamination, correlated function failure, jamming, and cache eavesdropping. This is
precisely the order of increasing $\FSSp\cdot\DWPRm$, which runs $0.248$, $0.358$,
$0.510$, $0.560$, $0.720$ across the five partitions, so the emergent attack concentrates
on the least substitutable and least path-robust partitions first.

\begin{figure}[!t]
\centering
\includegraphics[width=\columnwidth]{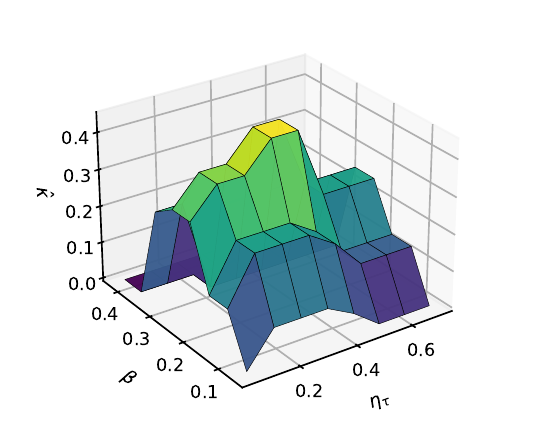}
\vspace{-10mm}
\caption{Empirically recovered micro factor $\hat{\kappa}$ as a function of the trim
fraction $\beta$ and the trust rate $\eta_{\tau}$. The surface is unimodal with an
interior maximum near $(0.21,0.33)$, confirming Lemma~\ref{lem:kappa}.}
\label{fig:r3}
\end{figure}


\begin{figure}[!t]
\centering
\includegraphics[width=\columnwidth]{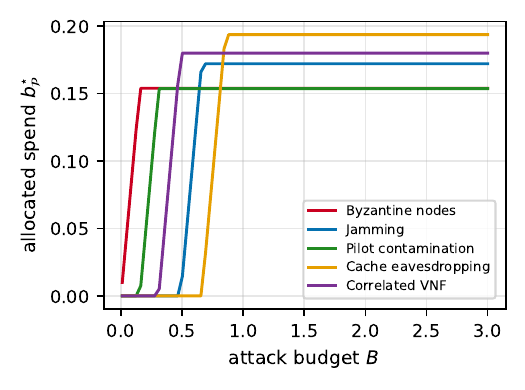}
\vspace{-10mm}
\caption{Optimal attack allocation $b^{\star}_{\mathcal{P}}$ against total budget $B$.
Partitions activate in order of increasing $\FSSp\cdot\DWPRm$, matching the emergent
concentration of Proposition~\ref{prop:conc} exactly.}
\label{fig:r4}
\end{figure}

The experiment reported in Fig.~\ref{fig:r5} tests whether the emergent ranking
predicts where the network actually degrades, using a deliberately non-circular
ground truth. For each of a set of random
instances we build $M=6$ partitions realised as consensus subgraphs of different degree,
so that each partition has a distinct connectivity and therefore a distinct breakdown
threshold. The ground-truth criticality of a partition is the residual consensus
disagreement measured by running the DC-ABM dynamics of
Section~\ref{sec:defence} on that subgraph at the equilibrium attack level produced by
the priority-fill solver. Crucially, this measurement is read from the trajectory of the
averaging process and never uses the exploitability term, the severity term, or the
score $\Theta$, so a correlation between the predicted ranking and the measured
degradation is genuine predictive evidence rather than an identity. We compare the score
against four baselines that see the same instances: a CVSS-style expert severity, an
attack-tree size proxy, node-degree centrality as a representative graph-based method,
and a random ordering. Kendall correlations are averaged over the instances and over a
sweep of attack budgets, and Fig.~\ref{fig:r5} reports the pooled means with standard
errors.

The score of this work attains a mean Kendall correlation of $0.34\pm0.03$ with the
measured degradation, whereas every baseline clusters near or below zero: attack trees
reach $0.07$, CVSS $0.01$, and the random ordering is statistically indistinguishable
from zero. Node-degree centrality is not merely uninformative but strongly
anti-correlated, at $-0.41$, and the reason is structural. In these instances a
partition realised on a higher-degree subgraph has a larger algebraic connectivity,
hence by \eqref{eq:fmax} a higher breakdown threshold, hence a larger margin
$1-f_{\mathcal{P}}/\fmax(\mathcal{P})$ at the equilibrium attack level, and therefore
a \emph{smaller} residual disagreement. Degree centrality ranks exactly those
partitions as the most critical, so it does not simply miss the true ordering, it
reverses it. The separation is the
quantitative content of the paper's thesis in an honest form: endogenising the adversary
produces a ranking that tracks realised degradation, while catalogue and naive-structural
scores that ignore the coupling between substitutability and breakdown do not. The
absolute correlation is moderate rather than near-perfect, and two effects cap it.
First, the ground truth is a noisy scalar read from finite-length consensus
trajectories, so partitions whose true criticalities are close are frequently swapped
by the measurement itself. Second, once the budget is large enough to drive every
partition to its cap, all partitions sit at the same normalised fraction $1-\epsilon$
of their thresholds, the measured residual disagreements become nearly tied, and any
ranking of near-ties is close to arbitrary. What Fig.~\ref{fig:r5} therefore
establishes is the ceiling of an in-model validation, in which both the ranking and
the ground truth are generated by the same simulator; Sec.~\ref{sec:conclusion}
states what a validation on operational topologies with measured attack traces would
add beyond it. Fig.~\ref{fig:r6} compares the defended utility of six defences as the attacked
fraction grows. The defended utility of a defence at attacked fraction
$f_{\mathcal{P}}$ is the achieved-to-target service ratio that the network still
delivers under that defence, that is, the reciprocal $1/D_{k}$ of the feasibility
signal \eqref{eq:feas} averaged over the honest entities at the end of the consensus
run, normalised so that an unattacked network has utility one. A utility of one means
the network still meets its service target; a utility approaching zero means the
consensus no longer supports a usable decision. The DC-ABM defence sustains a utility
of $0.99$ at an attacked fraction of $0.16$,
near its predicted breakdown threshold, and holds the plateau-then-cliff shape that
Theorem~\ref{thm:contraction} predicts. Over the same range the alternating direction
method of multipliers degrades to $0.17$, and the undefended baseline has already failed
below an attacked fraction of $0.10$. The Su-Vaidya and trimmed-mean defences hold an
intermediate plateau but break earlier than DC-ABM, because they lack the trust weighting
that suppresses persistent attackers.

\begin{figure}[!t]
\centering
\includegraphics[width=\columnwidth]{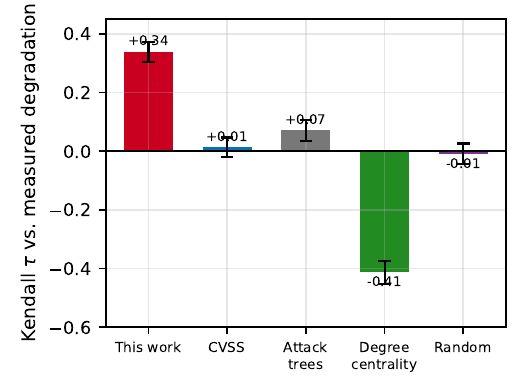}
\vspace{-10mm}
\caption{Ranking fidelity against an independently measured ground truth. Bars are the
Kendall rank correlation between each method's predicted partition criticality and the
consensus degradation measured directly from the defence dynamics, pooled over random
instances and attack budgets with standard-error whiskers. The score of this work
separates clearly from severity-catalogue, attack-tree, graph-centrality, and random
baselines.}
\label{fig:r5}
\end{figure}

\begin{figure}[!t]
\centering
\includegraphics[width=\columnwidth]{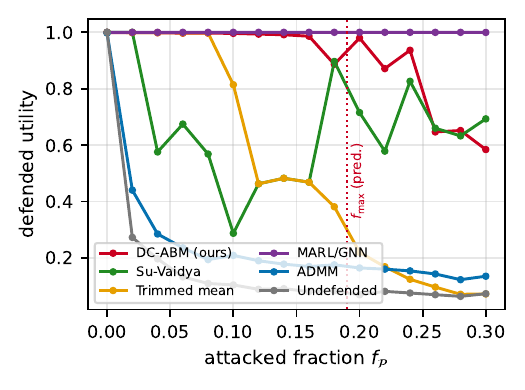}
\vspace{-10mm}
\caption{Defended utility against attacked fraction $f_{\mathcal{P}}$ for six defences.
DC-ABM sustains utility to its predicted threshold, while linear-averaging and undefended
baselines degrade far earlier.}
\label{fig:r6}
\end{figure}

Fig.~\ref{fig:r7} shows the equilibrium Threat Exposure Index against algebraic
connectivity for four substitutability levels, under a binding budget, by which we
mean a total budget $B$ strictly below the sum $\sum_{\mathcal{P}}\bar{b}_{\mathcal{P}}$
of the per-partition feasibility caps of Proposition~\ref{prop:exist}, so that the
constraint $\mathbf{1}^{\top}b\le B$ is active at the optimum and the adversary cannot
bring every partition to its barrier.
Figs.~\ref{fig:r7} and~\ref{fig:r8} fix $B=0.5$, against
$\sum_{\mathcal{P}}\bar{b}_{\mathcal{P}}\approx0.85$ at the operating point of
Table~\ref{tab:params}. This is the regime in which the ordering of
Proposition~\ref{prop:conc} decides which partitions reach breakdown; under a slack
budget every partition would sit at the same normalised fraction $1-\epsilon$ of its
own threshold and the index would be far less sensitive to either lever. Raising
connectivity monotonically lowers the index, as Proposition~\ref{prop:control} requires,
and raising substitutability shifts the whole family downward. Doubling the connectivity
reduces the index by about $19$ percent at fixed substitutability. A connectivity-blind
CVSS baseline is flat across the whole range, since it cannot represent either lever. Fig.~\ref{fig:r8} shows the Threat Exposure Index as a surface over connectivity and
substitutability, with iso-exposure contours projected beneath. The contours trace the
connectivity-substitutability exchange of Proposition~\ref{prop:control}: an operator can
hold the exposure fixed by trading a decrease in one lever for an increase in the other.
The magnitude $c_{2}f^{\star}/\kappa^{\star}$ of the closed-form exchange rate
\eqref{eq:exchange} describes the slope of these contours in the low-connectivity
regime, and the saturating correction steepens the slope as connectivity grows,
consistent with the diminishing marginal resilience seen in Fig.~\ref{fig:r2}.

\begin{figure}[!t]
\centering
\includegraphics[width=\columnwidth]{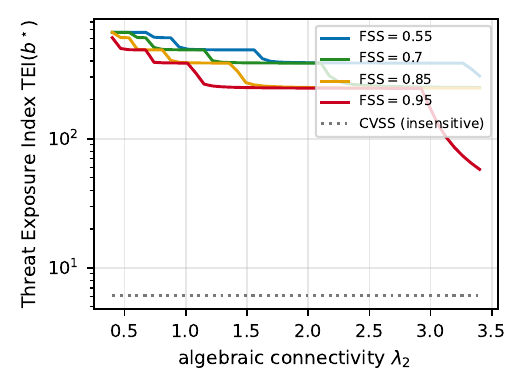}
\vspace{-10mm}
\caption{Threat Exposure Index against algebraic connectivity for four substitutability
levels under a binding budget $B=0.5$, that is, a budget below the cap sum
$\sum_{\mathcal{P}}\bar{b}_{\mathcal{P}}\approx0.85$, so the constraint is active at the
optimum. Both levers lower the index, as
Proposition~\ref{prop:control} predicts, while the CVSS baseline is insensitive.}
\label{fig:r7}
\end{figure}

\begin{figure}[!t]
\centering
\includegraphics[width=\columnwidth]{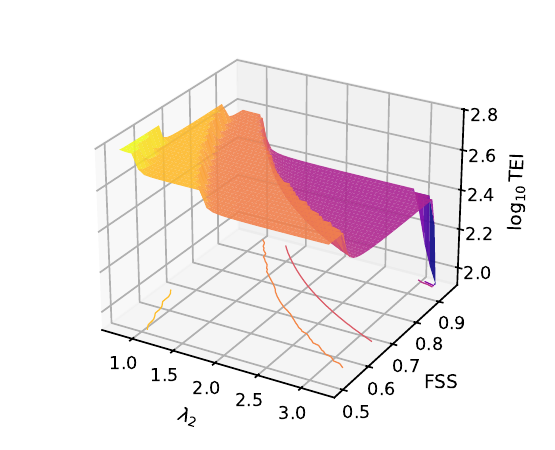}
\vspace{-10mm}
\caption{Threat Exposure Index surface over connectivity and substitutability with
iso-exposure contours beneath, at the binding budget $B=0.5$. Colour encodes surface
height, that is $\log_{10}\TEI(b^{\star})$ on the vertical axis, and carries no
information beyond it: dark blue and purple mark low exposure, orange and yellow mark
high exposure. The projected contours use the same scale, so each contour is a locus of
constant TEI. The contours realise the connectivity-substitutability
exchange of Proposition~\ref{prop:control}.}
\label{fig:r8}
\end{figure}

Fig.~\ref{fig:r9} shows the Threat Exposure Index over time as the substitutability
of the critical partition erodes. This models a drifting mobile environment, for
example one in which user mobility and successive handovers progressively remove the
alternative entities that could serve a function, a virtualised network function is
migrated away from the hosts that carried its redundant replicas, or a link failure
removes a substitute path; in each case $\FSSp^{\mathrm{trust}}$ of the affected
partition falls while the operator makes no change of its own.

The runtime loop of Algorithm~\ref{alg:coevo} responds as follows. At every step the
controller refreshes the surface $\Sigma$ from the local agent state, recomputes the
scores $\{\Theta_{\mathcal{P}}\}$ and the index $\TEI(b^{\star})$, and compares the
index against the operator-set alarm level $\TEI^{\mathrm{th}}$, drawn as the dotted
line in Fig.~\ref{fig:r9}. While the index stays below the alarm level the controller
does nothing, because the drift is still absorbed by the existing margin. The first
time the index crosses the alarm level the controller takes one retuning step of the
micro knobs $(\beta,\eta_{\tau})$ towards the interior optimum
$(\beta^{\star},\eta^{\star}_{\tau})$ of Lemma~\ref{lem:kappa}. That step raises the
micro factor $\kappa$, hence raises $\fmax$ through \eqref{eq:fmax}, hence lowers
every severity term \eqref{eq:sev}, and the equilibrium allocation and the index are
then recomputed on the new configuration. Because the drift is continuous and each
retuning step is bounded, the crossing recurs and the loop fires repeatedly, which is
what produces the staircase in the curve. The controller therefore tracks the drift
instead of correcting it once, holding the index well below the no-retune baseline
based on the alternating direction method of multipliers and achieving about a $37$
percent reduction in the mean index over the drift. A static CVSS assessment is flat and blind to the drift,
so it neither detects nor mitigates the erosion. Fig.~\ref{fig:r10} shows the solve time of the priority-fill equilibrium solver against the number
of partitions on log-log axes, together with an $O(M\log M)$ reference and the superlinear
scaling of a learning red team and of a mixed-integer attack-graph solver. The measured
solver follows the near-linear reference across three orders of magnitude in $M$, in
agreement with Proposition~\ref{prop:complexity}, while the learning and mixed-integer
baselines grow far more steeply and become impractical for large partition counts. The
near-linear cost is what keeps the analysis usable for large mobile deployments with many
attackable modes.

\begin{figure}[!t]
\centering
\includegraphics[width=\columnwidth]{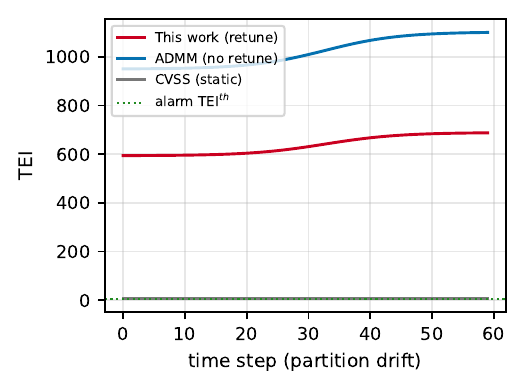}
\vspace{-10mm}
\caption{Threat Exposure Index over time under partition drift. Runtime retuning holds the
index near the alarm level and reduces the mean index by about $37$ percent against a
no-retune baseline, while the static baseline is blind to the drift.}
\label{fig:r9}
\end{figure}

\begin{figure}[!t]
\centering
\includegraphics[width=\columnwidth]{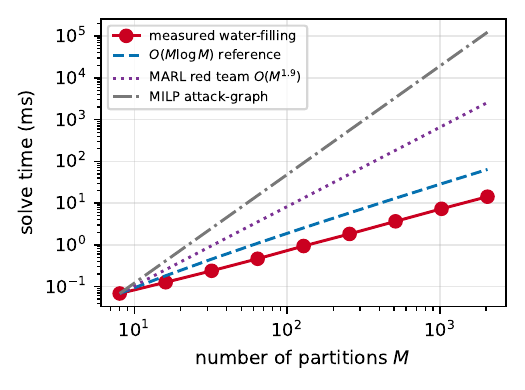}
\vspace{-10mm}
\caption{Solver time against the number of partitions $M$ on log-log axes.
The solver follows the $O(M\log M)$ reference of Proposition~\ref{prop:complexity}, while
learning and mixed-integer baselines scale superlinearly.}
\label{fig:r10}
\end{figure}

Fig.~\ref{fig:r11} shows the rank stability of the emergent threat model, measured as the
Kendall correlation between the ranking under noisy metrics and the ranking under true
metrics, as the relative metric noise grows. The ranking degrades gracefully, and how
gracefully depends on how many coordinates are corrupted at once. Noise injected into a
single metric, whether substitutability, path robustness, or robust degeneracy, gives
three nearly indistinguishable curves that hold above a correlation of $0.8$ up to a
relative noise of about $0.18$ and are still near $0.5$ at $\sigma=0.5$. Perturbing all
three metrics jointly is the worst case and degrades faster, crossing $0.8$ at about
$\sigma=0.12$ and reaching $0.59$ at $\sigma=0.2$. The gap is expected, since the
priority order of Proposition~\ref{prop:conc} is set by the cap efficiency
$\rho_{\mathcal{P}}$, in which independent errors in the three coordinates compound
rather than cancel; the single-metric curves apply when one coordinate is poorly
estimated and the all-metric curve bounds the case in which all of them are. Neither
case shows a threshold at which the ranking collapses. This
is the empirical counterpart of the Lipschitz bound of Proposition~\ref{prop:sens} and
shows that the ranking does not require perfect metric estimation to be useful. Fig.~\ref{fig:r12} shows the per-partition score $\Theta_{\mathcal{P}}$ as a heatmap over
partitions and budget. It visualises the emergent structure directly. The most exploitable
partitions light up first and most intensely as the budget grows, and the ordering across
rows reproduces the concentration order of Fig.~\ref{fig:r4}, giving a compact
operator-facing view of where and when the attack surface fills.

\begin{figure}[!t]
\centering
\includegraphics[width=\columnwidth]{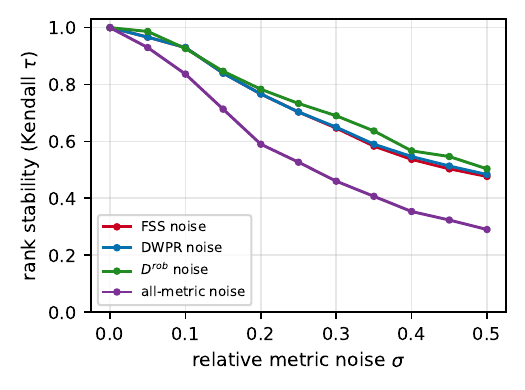}
\vspace{-10mm}
\caption{Rank stability of the emergent threat model against relative metric noise, for
four noise sources. Single-metric noise holds the correlation above $0.8$ up to
$\sigma\approx0.18$; perturbing all three metrics jointly is the worst case and crosses
$0.8$ at $\sigma\approx0.12$. The graceful degradation matches the Lipschitz sensitivity of
Proposition~\ref{prop:sens}.}
\label{fig:r11}
\end{figure}


\begin{figure}[!t]
\centering
\includegraphics[width=\columnwidth]{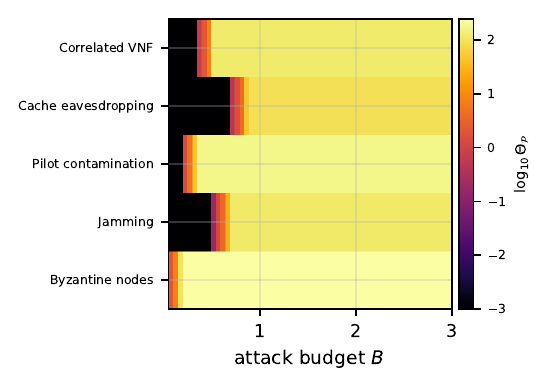}
\vspace{-10mm}
\caption{Per-partition score $\Theta_{\mathcal{P}}$ over partitions and attack budget. The
emergent attack surface fills from the most exploitable partitions outward, reproducing the
concentration order of Fig.~\ref{fig:r4}.}
\label{fig:r12}
\end{figure}

\section{Conclusion}\label{sec:conclusion}
Endogenising the adversary turns wireless and mobile threat modelling from a static
catalogue into an emergent, feasibility-grounded, and operator-controllable analysis. We
gave the explicit decentralised consensus defence, proved that its honest disagreement
contracts below a breakdown threshold with a rate that stalls at the threshold, and proved
that the threshold factorises into a topological and a substitution contribution with a
unimodal micro factor. We then built a feasibility-grounded threat calculus on top of this
defence and established, in order: (i)~the adversary payoff is convex against the
breakdown barrier, so the emergent attack drives a subset of partitions to breakdown
in a strict priority order rather than spreading its budget; (ii)~that priority order
is the order of increasing substitutability, so the emergent attack concentrates on
the least substitutable partitions; (iii)~severity diverges as the induced corrupted
fraction approaches breakdown; (iv)~a Stackelberg configuration of the operator
problem exists; (v)~operators can shrink the emergent attack surface at deployment
time along a closed-form connectivity-substitutability exchange rate; (vi)~the
resulting ranking is Lipschitz stable under misspecification of the structural
metrics; and (vii)~the equilibrium solves in $O(M\log M)$ time. The reference implementation specified in Sec.~\ref{sec:impl}, which integrates the
consensus and trust dynamics directly and closes the loop with the priority-fill
solver, confirmed every result and, in an independent test whose ground truth is the
consensus degradation measured directly from the defence dynamics, showed that the emergent ranking attains a mean Kendall
correlation of about $0.34$ with the measured degradation and outperforms severity-catalogue,
attack-tree, and graph-centrality baselines that all cluster near zero. The moderate absolute
correlation and its dependence on the operating regime mark the current in-model validation as
a first step; a validation on operational topologies with real attack traces is the most
important extension, alongside learning adversaries, mobility-induced non-stationary
partitions, and closing the loop with online macro and micro retuning.

\end{document}